\newif\iffull \fulltrue
\newcommand{\EasyCrypt}{\textsf{EasyCrypt}\xspace}
\newcommand{\Sahl}{\textsf{aHL}\xspace}
\newcommand{\Sprhl}{\textsf{pRHL}\xspace}
\newcommand{\Saprhl}{\textsf{apRHL}\xspace}
\newcommand{\Saprhlp}{\textsf{apRHL}$^+$\xspace}
\newcommand{\EXAMPLE}{$\mathsf{ASV}_{\mathrm{bt}}$\xspace}
\newcommandx{\note}[2][1=]{\todo[inline,linecolor=red,backgroundcolor=red!25,bordercolor=red,#1]{#2}}
\newcommand{\Dist}{\ensuremath{\mathbf{Distr}}}
\newcommand{\supp}{\mathsf{supp}}
\newcommand{\subst}[2]{\left\{#2/#1\right\}}
\newcommand{\Var}{\mathcal{X}}
\newcommand{\Mem}{\mathsf{State}}
\newcommand\q{[\![}
\newcommand\p{]\!]}
\newcommand{\ahl}[4]{\vdash_{#4} #1 : #2 \Longrightarrow #3}
\newcommand{\AEquiv}[6]{\vdash {#1} \sim_{\!\left\langle#5,#6\right\rangle} {#2} : {#3} \Longrightarrow {#4}}
\newcommand{\sidel}{\langle 1\rangle}
\newcommand{\sider}{\langle 2\rangle}
\newcommand\denot[1]{\q #1 \p}
\newcommand{\dsem}[2]{\denot{#2}_{#1}}
\newcommand{\Skip}{\mathsf{skip}}
\newcommand{\Seq}[2]{{#1};\,{#2}}
\newcommand{\Ass}[2]{#1 \leftarrow #2}
\newcommand{\Rand}[2]{#1 \stackrel{\raisebox{-.25ex}[.25ex]%
 {\tiny $\mathdollar$}}{\raisebox{-.2ex}[.2ex]{$\leftarrow$}} #2}
\newcommand{\Cond}[3]{\mathsf{if}\ #1\ \mathsf{then}\ #2\ \mathsf{else}\ #3}
\newcommand{\Condt}[2]{\mathsf{if}\ #1\ \mathsf{then}\ #2}
\newcommand{\WWhile}[2]{\mathsf{while}\ #1\ \mathsf{do}\ #2}
\newcommand{\Call}[3]{#1 \leftarrow #2\mathsf{(}#3\mathsf{)}}
\newcommand{\Lap}{\mathcal{L}}
\newcommand{\Expr}{\mathcal{E}}
\newcommand{\Cmd}{\mathcal{C}}
\newcommand{\A}{\mathcal{A}}
\newcommand{\Or}{\mathcal{O}}
\newcommand{\lift}[1]{\mathrel{#1^\sharp}}
\newcommand{\alift}[2]{\mathrel{#1^{\sharp #2}}}
\newcommand\Small{\fontsize{8.2pt}{8.4pt}\selectfont}
\newcommand*\LSTfont{\Small\ttfamily\SetTracking{encoding=*}{-60}\lsstyle}
\def\lstrnd{\stackrel{\raisebox{-.15ex}{\ensuremath{\scriptscriptstyle\$}}}{\raisebox{-.2ex}{\ensuremath{\leftarrow}}}}
\definecolor{DarkGreen}{rgb}{0.1,0.5,0.1}
\definecolor{DarkRed}{rgb}{0.5,0.1,0.1}
\definecolor{DarkBlue}{rgb}{0.1,0.1,0.5}
\declaretheorem{theorem}
\declaretheorem[numberlike=theorem]{definition}
\declaretheorem[numberlike=theorem]{lemma}
\declaretheorem[numberlike=theorem]{proposition}
\crefname{section}{\S}{\S}
\Crefname{section}{\S}{\S}
\crefname{prop}{proposition}{propositions}
\Crefname{prop}{Proposition}{Propositions}
\crefname{lem}{lemma}{lemmas}
\Crefname{lem}{Lemma}{Lemmas}
\crefname{thm}{theorem}{theorems}
\Crefname{thm}{Theorem}{Theorems}
\crefname{definition}{definition}{definitions}
\Crefname{definition}{Definition}{Definitions}
\newcommand{\thefullversion}{\iffull the appendix\else the full version\fi}
\title{Advanced Probabilistic Couplings for Differential Privacy\iffull\else\titlenote{%
  The full version of this paper is available at
  \mbox{\url{https://arxiv.org/abs/1606.07143}}.}\fi}
\author{
  \alignauthor
  Gilles Barthe \\
    \affaddr{IMDEA Software Institute} \\
    \affaddr{Madrid, Spain}
  \alignauthor
  Noémie Fong \\
    \affaddr{ENS} \\
    \affaddr{Paris, France}
  \alignauthor
  Marco Gaboardi\titlenote{Partially supported by NSF grants CNS-1237235,
    CNS-1565365 and by EPSRC grant EP/M022358/1.} \\
    \affaddr{University at Buffalo, SUNY} \\
    \affaddr{Buffalo, USA}
  \and
  \alignauthor
  Benjamin Grégoire \\
    \affaddr{Inria} \\
    \affaddr{Sophia-Antipolis, France}
  \alignauthor
  Justin Hsu\titlenote{%
    Partially supported by NSF grants $\#1065060$ and $\#1513694$, and a grant
    from the Simons Foundation ($\#360368$ to Justin Hsu).}\\
    \affaddr{University of Pennsylvania} \\
    \affaddr{Philadelphia, USA}
  \alignauthor
  Pierre-Yves Strub \\
    \affaddr{IMDEA Software Institute} \\
    \affaddr{Madrid, Spain}
}
\begin{document}

\CopyrightYear{2016} 
\setcopyright{acmlicensed}
\conferenceinfo{CCS'16,}{October 24 - 28, 2016, Vienna, Austria}
\isbn{978-1-4503-4139-4/16/10}\acmPrice{\$15.00}
\doi{http://dx.doi.org/10.1145/2976749.2978391}

\maketitle

\begin{abstract}
Differential privacy is a promising formal approach to data privacy, which
provides a quantitative bound on the privacy cost of an algorithm that operates
on sensitive information.  Several tools have been developed for the formal
verification of differentially private algorithms, including program logics and
type systems.  However, these tools do not capture fundamental techniques that
have emerged in recent years, and cannot be used for reasoning about
cutting-edge differentially private algorithms. Existing techniques fail to
handle three broad classes of algorithms: 1) algorithms where privacy depends on
accuracy guarantees, 2) algorithms that are analyzed with the advanced
composition theorem, which shows slower growth in the privacy cost, 3)
algorithms that interactively accept adaptive inputs.

We address these limitations with a new formalism extending
\Saprhl~\citep{BartheKOZ13}, a relational program logic that has been
used for proving differential privacy of non-interactive algorithms,
and incorporating \Sahl~\citep{BGGHS16-icalp}, a (non-relational)
program logic for accuracy properties.  We illustrate our approach
through a single running example, which exemplifies the three classes
of algorithms and explores new variants of the Sparse Vector
technique, a well-studied algorithm from the privacy literature.  We
implement our logic in \EasyCrypt, and formally verify privacy. We
also introduce a novel coupling technique called \emph{optimal subset
  coupling} that may be of independent interest.
\end{abstract}

\section{Introduction}

\emph{Differential privacy}, a rigorous and quantitative notion of
statistical privacy, is one of the most promising formal definitions
of privacy to date.  Since its initial formulation by \citet{DMNS06},
differential privacy has attracted substantial attention throughout
computer science, including areas like databases, machine learning,
and optimization, and more.

There are several reasons for this success. For one, differential privacy allows
a formal trade-off between privacy and accuracy: differentially private
algorithms come with a \emph{privacy guarantee} expressed in terms of two
parameters $\epsilon$ (expressing the privacy cost) and $\delta$ (expressing the
probability of violating the privacy cost). For both parameters, smaller values
offer stronger privacy guarantees.

Another important advantage differential privacy is that it \emph{composes}
well: differentially private algorithms can be easily combined to build new
private algorithms. The differential privacy literature offers several
\emph{composition theorems}, differing in how the privacy parameter of the
larger algorithm depends on the parameters of the components. These composition
properties can also be used in \emph{interactive and adaptive} scenarios where
an adversary can decide which algorithm to run depending on the outputs of
previous algorithms.

Differential privacy's clean composition properties also make it an attractive
target for an unusually diverse array of formal verification techniques.  By
now, there are tools that formally guarantee differential privacy via
relational program logics~\citep{BartheKOZ13,BartheO13}, linear type
systems~\citep{ReedPierce10,GHHNP13,EignerM13}, interactive
automata~\citep{Tschantz201161,xu:hal-00879140}, product
programs~\citep{BGGHKS14}, satisfiability modulo theories~\citep{FredriksonJ14},
refinement type systems~\citep{BGGHRS15}, and more.  While these systems
formalize privacy through a wide variety techniques, most of these approaches
analyze a composition of private algorithms using the \emph{sequential
  composition} theorem of differential privacy, which guarantees that the
resulting algorithms have parameter $\epsilon$ and $\delta$ equal to the
\emph{sum} of the parameters of their components.

Recently, \citet{BGGHS16} highlighted a close connection between
\emph{approximate couplings} and differential privacy which enables formal
verification beyond sequential composition.  \citet{BGGHS16} work with the
relational program logic \Saprhl~\citep{BartheKOZ13}, extended with a new
composition principle called \emph{pointwise privacy}. The idea is to first
prove a restricted case of privacy---corresponding roughly to privacy for a
single output---and then combine the results to prove the full differential
privacy property.  Combined with the composition principle for approximate
couplings, which generalizes the sequential composition theorem, \Saprhl can
express simple, compositional proofs of algorithms like the one-side Laplace
implementation of the Exponential mechanism~\citep{DR14} and the Above Threshold
algorithm~\citep{DR14} while abstracting away reasoning about probabilistic
events. Existing privacy proofs for these algorithms, even on paper, involve ad
hoc computation of probabilities.

While \Saprhl substantially expands the range of formal verification in
differential privacy, there are still private algorithms that \Saprhl cannot
verify. Roughly, there are three missing pieces:
\begin{itemize}
  \item \emph{Accuracy-dependent privacy}. Some algorithms are only private if
    an accuracy property holds.
  \item \emph{Advanced composition}.  This principle shows slower growth in the
    privacy cost, in exchange for a small probability of violating privacy. The
    proof involves an technical martingale concentration argument.
  \item \emph{Interactive privacy}. Some private algorithms are
    \emph{interactive}, receiving a continuous sequence of \emph{adaptive}
    inputs while producing intermediate outputs.
\end{itemize}

These three missing pieces correspond to three fundamental principles of
differential privacy. While there are many algorithms from the privacy
literature that use one (or more) of these three features, to structure our
presentation we will work with a variant of the Sparse Vector technique based on
the Between Thresholds algorithm \citep{BunSU16}, a single unifying example that
uses all three features (\cref{sec:motivating}). After reviewing some technical
preliminaries about differential privacy, approximate couplings, and the logic
\Saprhl (\cref{sec:prelim}), we describe extensions to \Saprhl to verify privacy
for new classes of algorithms.

\begin{itemize}
  \item New proof rules that allow reasoning within \Saprhl
    while assuming an accuracy property, incorporating accuracy proofs from the
    Hoare logic \Sahl~\citep{BGGHS16-icalp} (\cref{sec:accpriv}). We demonstrate
    these rules on a classic example of accuracy-dependent privacy: the
    Propose-test-release framework~\citep{DworkL09,ThakurtaS13}.
  \item A proof rule that analyzes loops using the advanced composition
    principle; soundness relies on a novel generalization of advanced
    composition to approximate couplings (\cref{sec:advcomp}).
  \item New proof rules for \emph{adversaries}, external procedure calls
    that model an adaptive source of inputs (\cref{sec:interactive}).
  \item Orthogonal to reasoning about accuracy, advanced composition, and
    adversarial inputs, we introduce a general construction that may be of
    independent interest called the \emph{optimal subset coupling}. This
    construction gives an approximate lifting relating subsets that yields the
    best possible $\epsilon$, and we use this construction to give a new
    interval coupling rule for the Laplace distribution (\cref{sec:optimal}).
\end{itemize}

We then show how to combine these ingredients to verify our main running
example, the Between Thresholds algorithm (\cref{sec:bt}).  We finish with
related work (\cref{sec:related}) and some concluding thoughts
(\cref{sec:conclusions}).

\section{Motivating example}\label{sec:motivating}

Before diving into the technical details we'll first present our motivating
example, which involves accuracy-dependent privacy, advanced composition, and
interactive privacy. We first review the definition of differential privacy, a
relational property about probabilistic programs proposed by Dwork, McSherry,
Nissim and Smith.

\begin{definition}
  Let the \emph{adjacency relation} be $\Phi \subseteq A \times A$, and
  $\epsilon, \delta > 0$.  A program $M:A\rightarrow\Dist(B)$ satisfies
  $(\epsilon,\delta)$-\emph{differential privacy} with respect to $\Phi$ if for
  every pair of inputs $a, a'\in A$ such that $\Phi(a, a')$ and every subset of
  outputs $S \subseteq B$, we have
  \[
    \Pr_{y\leftarrow M a}[y \in S]
    \leq \exp(\epsilon) \Pr_{y\leftarrow M a'} [y \in S] + \delta .
  \]
  When $\delta = 0$, we say that $M$ is $\epsilon$-\emph{differentially private}.
\end{definition}

Intuitively, $\Phi$ relates inputs that differ in a single individual's data.
Then, differential privacy requires that the two resulting distributions on
outputs should be close.

Our motivating example is \emph{Adaptive Sparse Vector for Between
  Thresholds} (\EXAMPLE), a variant of the Sparse Vector algorithm.
Our algorithm takes a stream of numeric queries as input, and answers
only the queries that take a value within some range. The main benefit
of Sparse Vector is that queries that take a value outside the range do
not increase the privacy cost, even though testing whether whether the
query is (approximately) in the range involves private data. Sparse
Vector is an appealing example, because of its popularity and its
difficulty. In particular, the privacy proof of Above Threshold is
non-compositional and notoriously tricky, and several
variants\footnote{There exist multiple versions of Sparse Vector. The
  earliest reference seems to be \citet{DNRRV09}; several refinements
  were proposed by \citet{RR10,HR10}. Applications often use their own
  variants, e.g.~\citet{ShokriS15}. The most canonical version of the
  algorithm is the version by~\citet{DR14}.} of the
algorithm were supposedly proved to be private but were later shown to
be non-private (\citet{lyu2016understanding} provide a comprehensive
survey).

The code of \EXAMPLE is shown in \cref{fig:running}. At a high level,
the algorithm accepts a stream of adversarially chosen queries and
produces a list of queries whose answer lies (approximately) between
two threshold parameters $a$ and $b$.  The algorithm computes noisy
versions $A$ and $B$ of $a$ and $b$ using the Laplace mechanism
$\Lap_{\epsilon}$, which we review in the next section, and then
performs an interactive loop for a fixed number ($N$) of
iterations. Each step, a stateful adversary $\mathcal{A}$ receives the
current list $l$ of queries whose answer on input database $d$ lies
between $[A,B]$ and selects a new query $q$. If its noisy answer $S$
lies in $[A, B]$ and there have been fewer than $M$ queries between
threshold, the algorithm adds $q$ to the list $l$. Our algorithm
differs from standard presentations of Adaptive Sparse
Vector~\citep{DR14} in two significant respects:
\begin{itemize}
\item we use \textsf{BetweenThresholds} rather than
  \textsf{AboveThreshold} for deciding whether to add a query to the
  list;
\item we do not rerandomize the noise on the thresholds each time a
  query is added to $l$; therefore, our algorithm adds less noise.
\end{itemize}
\begin{figure}[t]
\[
  \begin{array}{l} \\
  \mbox{\EXAMPLE}(a,b,M,N,d) := \\
    \Ass{i}{0};
    \Ass{l}{[]}; \\
    \Rand{u}{\Lap_{\epsilon/2}(0)}; \\
    \Ass{A}{a-u};
    \Ass{B}{b+u}; \\
    \WWhile{i<N \land |l|< M}{} \\
    \quad  \Call{q}{\mathcal{A}}{l};\\
    \quad  \Rand{S}{\Lap_{\epsilon'/3}(\mathsf{evalQ}(q,d))};  \\
    \quad  \mathsf{if}~ (A\leq S \leq B)~
     \mathsf{then}~\Ass{l}{i::l}; \\
    \quad  \Ass{i}{i + 1}; \\
    \mathsf{return}~l
  \end{array}
\]
\caption{Sparse Vector for Between Thresholds}
\label{fig:running}
\end{figure}

\EXAMPLE satisfies the following privacy guarantee.
\begin{theorem}\label{main:thm}
  Let $\epsilon$ and $\delta$ both be in $(0, 1)$. Set
  \[
    \epsilon' \triangleq \frac{\epsilon}{4 \sqrt{2 M \ln(2/\delta)}} ,
  \]
  and suppose $a$ and $b$ are such that
\[
    b - a \geq \frac{6}{\epsilon'} \ln (4/\epsilon') + \frac{4}{\epsilon}
    \ln(2/\delta) .
  \]
If all adversarial queries $q$ are $1$-sensitive (i.e.\,
$|\mathsf{evalQ}(q,d)-\mathsf{evalQ}(q,d')| \leq 1$ for every adjacent
databases $d$ and $d'$), then \EXAMPLE is $(\epsilon,\delta)$-differentially
private.
\end{theorem}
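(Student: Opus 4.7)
The plan is to combine the four new ingredients developed in the preceding sections---accuracy-dependent privacy, advanced composition, adversary rules, and the optimal subset coupling---into a single structured \Saprhlp proof. I would decompose the overall privacy budget as $\epsilon = \epsilon/2 + \epsilon/2$, with half spent on the initial noisy-threshold sampling and half on the main loop. The failure probability $\delta$ would likewise be split as $\delta/2 + \delta/2$: one half to absorb an accuracy failure on the threshold noise $u$, and the other for the failure probability incurred by advanced composition.

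First, I would derive in \Sahl the standard Laplace tail bound, giving that with probability at least $1 - \delta/2$ the sample $u \sim \Lap_{\epsilon/2}(0)$ satisfies $|u| \leq (2/\epsilon)\ln(2/\delta)$; combined with the width hypothesis on $b-a$, this accuracy event ensures that the noisy interval $[A,B]$ contains a ``safe zone'' of length at least $(6/\epsilon')\ln(4/\epsilon')$ after the sensitivity shift. Second, I would apply the accuracy-dependent privacy rule to reduce the problem to proving privacy conditional on this accuracy event, charging the $\delta/2$ accuracy failure to the overall $\delta$ budget. Third, for the threshold noise itself, I would use the standard Laplace shift coupling from \Saprhl to absorb the $1$-sensitivity of queries into $u$, paying $\epsilon/2$ of the budget and obtaining coupled executions whose noisy thresholds are aligned modulo sensitivity. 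Fourth, for the main loop I would use the adversary rule to abstract $\mathcal{A}$ as an external black box, and apply the advanced-composition loop rule: within each iteration, I would split the analysis via pointwise privacy into two cases---when the query's true answer is sufficiently far from $[a,b]$, the accuracy event together with the width assumption forces the sampled $S$ outside $[A,B]$, and the optimal subset coupling yields an identical coupling at zero cost; otherwise, the optimal subset coupling for Laplace and an interval yields a per-iteration $O(\epsilon')$-approximate coupling. Because the loop guard admits at most $M$ accepting iterations, advanced composition over just these ``active'' steps contributes at most $\epsilon'\sqrt{2M\ln(2/\delta)} \leq \epsilon/2$ by the choice of $\epsilon'$, plus $\delta/2$ of additional failure probability. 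Summing gives $(\epsilon,\delta)$-differential privacy.

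The hardest part will be the loop analysis. One must pick a loop invariant that tracks not just the accumulated approximate-coupling cost but also the fact that only the at-most-$M$ ``active'' iterations actually spend privacy; otherwise advanced composition would amortize over $N \gg M$ steps and the bound would blow up. This in turn requires the width assumption on $b-a$ to give enough slack around the thresholds that, on the accuracy event, every query with true answer far from $[a,b]$ is coupled identically across the two executions at zero cost via the optimal subset coupling, leaving advanced composition to act only on the ``between thresholds'' queries. Orchestrating this invariant alongside the adversary rule (which must remain sound for arbitrary adaptive $\mathcal{A}$), the advanced composition rule, and the per-iteration optimal subset couplings is where the substantive work of the proof lies.
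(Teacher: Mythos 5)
Your high-level budget split and the overall cast of tools (\Sahl for the threshold accuracy bound, up-to-bad reasoning to discharge it, the shift coupling for the thresholds, pointwise equality, and advanced composition) match the paper's. However, there are two substantive gaps in how you handle the loop, and they are precisely the points where the proof actually goes through.

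First, you correctly observe that advanced composition must be charged to at most $M$ ``accepting'' iterations rather than all $N$, and you flag this as the crux---but you don't supply the mechanism. The paper resolves it by a \emph{program transformation}: \EXAMPLE is first rewritten (and proved equivalent, using an asynchronous loop rule) into a two-level loop where the \emph{outer} loop runs at most $M$ times and each outer iteration wraps an inner loop that scans queries until one falls between the thresholds. Advanced composition, via \textsc{[AC-While]}, is then applied to the outer loop with $n = M$; there is no need for a bespoke invariant that ``tracks only the active iterations'', because after the transformation the outer loop \emph{is} exactly the active iterations. Without some restructuring like this, the \textsc{[AC-While]} rule as given charges cost per iteration of the syntactic loop, and the bound blows up with $N$.

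Second, your per-iteration case split is not how the proof works, and one branch of it is actually unsound as stated. The paper does \emph{not} reason that ``the accuracy event forces $S$ outside $[A,B]$ for far queries''---$S$ is a Laplace sample, so no deterministic claim about where it lands can hold. Instead, pointwise equality is applied to the \emph{inner loop} over the index $v$ of the iteration that first accepts. For iterations $i \neq v$ the proof uses \textsc{[LapNull]}: coupling the noise identically means $|S\sidel - S\sider| \leq 1$, and since the threshold coupling nests $[A\sider, B\sider] \subseteq [A\sidel, B\sidel]$ with a margin of $1$ on each side, $S\sidel \notin [A\sidel, B\sidel]$ deterministically implies $S\sider \notin [A\sider, B\sider]$. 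That is where the zero cost comes from---no accuracy reasoning about $S$, and no subset coupling, is involved. The optimal subset coupling (\textsc{[LapInt]}) is used only at the single critical iteration $i = v$, where the accuracy assertion $\Psi$ (the noisy interval is wide enough) ensures the cost is bounded by $\epsilon'$. So the accuracy assertion is consumed by \textsc{[LapInt]} and removed at the very end by \textsc{[UtB-L]}; it is not used to reason about the location of $S$ in the non-accepting iterations.

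In short: the ingredients you list are right, but you are missing the nested-loop transformation that makes advanced composition chargeable to $M$ iterations, and your per-iteration analysis conflates the roles of \textsc{[LapNull]}, \textsc{[LapInt]}, and the accuracy event in a way that would not go through.
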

The formal proof of this theorem, which we have verified in an implementation of
our logic within the \EasyCrypt system, involves several features:
\begin{itemize}
 \item reasoning principles for mixing accuracy and privacy
   guarantees, using a combination of relational
   logics~\citep{BartheKOZ13,BGGHS16} and non-relational
   logics~\citep{BGGHS16-icalp};
\item a generalization of the advanced composition theorem for
  handling the body of the loop;
\item an adversary rule for handling interactive inputs in the loop; and
\item a new reasoning principle, called \emph{optimal subset coupling}, for
  handling the Laplace mechanism in the loop.
\end{itemize}
We stress that the use of pointwise equality, which is required for
proving privacy of between thresholds, makes the proof significantly
more challenging than other examples involving solely adaptive adversaries,
advanced composition, and accuracy-dependent privacy.

We remark that \citet{BunSU16} proposed Between Threshold and proved its
privacy. Their proof does not use advanced composition, and follows from a
somewhat complicated calculations about the probabilities of certain events. Our
proof demonstrates the power of approximate liftings: somewhat surprisingly, we
arrive at an elegant privacy proof without probabilistic reasoning.

\section{Background} \label{sec:prelim}
Before presenting our new extensions, we first review some
preliminaries about differential privacy, the connection to
approximate liftings, the program logic \Saprhl \citep{BartheKOZ13}
and its extension \Saprhlp \citep{BGGHS16}, and the union bound logic
\Sahl \citep{BGGHS16-icalp}.

\subsection{Mathematical preliminaries}

To avoid measure-theoretic issues, we base our technical development on
distributions over discrete sets $B$. A function $\mu:B\rightarrow
\mathbb{R}^{\geq 0}$ is a \emph{distribution} over $B$ if $\sum_{b\in\supp(\mu)}
\mu(b)= 1$.  As usual, the \emph{support} $\supp(\mu)$ is the subset of $B$ with
non-zero weight under $\mu$.  We write $\Dist(B)$ for the set of discrete
distributions over $B$.  Equality of distributions is defined as pointwise
equality of functions.

We will also use \emph{marginal distributions}. Formally, the first and second
marginals of a distribution $\mu\in\Dist(B_1\times B_2)$ are simply the
projections: the distributions $\pi_1(\mu)\in\Dist(B_1)$ and
$\pi_2(\mu)\in\Dist(B_2)$ given by
\[
\pi_1(\mu)(b_1)=\sum_{b_2\in B_2} \mu(b_1,b_2) \qquad
\pi_2(\mu)(b_2)=\sum_{b_1\in B_1} \mu(b_1,b_2) .
\]

\subsection{Differential privacy}

We will need several tools from differential privacy;
readers should consult the textbook by \citet{DR14} for a more comprehensive
introduction.  Most differentially private algorithms are constructed from
private primitive operations. The most famous primitive is the Laplace
mechanism.

\begin{definition}[Laplace mechanism \citep{DMNS06}]
  Let $\epsilon>0$. The \emph{(discrete) Laplace mechanism}
  $\Lap_\epsilon:\mathbb{Z}\rightarrow \Dist(\mathbb{Z})$ is defined by
  $\Lap_{\epsilon}(t) = t + \nu$, where $\nu \in \mathbb{Z}$ with probability
  proportional to
  \[
    \Pr[ \nu ] \propto \exp{(-\epsilon\cdot |\nu|)}.
  \]
\end{definition}
The level of privacy depends on the sensitivity of the query, which measures how
far the function may differ on two related inputs. Roughly, adding the same
level of Laplace noise to a higher sensitivity query will be less private.
\begin{definition}[Sensitivity]
  A function $F:A\rightarrow\mathbb{Z}$ is \emph{$k$-sensitive with respect to
    $\Phi\subseteq A\times A$} if $|F(a_1) - F(a_2)| \leq k$ for every
  $a_1,a_2\in A$ such that $\Phi(a_1, a_2)$.
\end{definition}

The Laplace mechanism satisfies an accuracy specification.

\begin{proposition}[Laplace accuracy] \label{prop:lap-acc}
  Let $\epsilon, \beta > 0$, and suppose $x$ is the result from running
  $\Lap_{\epsilon}(t)$. Then $|x - t| \leq \frac{1}{\epsilon} \ln
  \frac{1}{\beta}$ with probability at least $1 - \beta$.
\end{proposition}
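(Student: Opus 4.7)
The plan is to reduce the accuracy guarantee to a tail bound on the noise variable drawn from the Laplace distribution, and then obtain that tail bound by a direct geometric series computation. By definition, $x = t + \nu$ where $\nu \sim \Lap_\epsilon(0)$, so the event $|x - t| > \tfrac{1}{\epsilon}\ln\tfrac{1}{\beta}$ coincides with the event $|\nu| > \tfrac{1}{\epsilon}\ln\tfrac{1}{\beta}$. It therefore suffices to prove
\[
  \Pr\!\left[\,|\nu| > \tfrac{1}{\epsilon}\ln\tfrac{1}{\beta}\,\right] \leq \beta.
\]

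First, I would pin down the normalizing constant $Z = \sum_{n \in \mathbb{Z}} \exp(-\epsilon |n|)$ using two geometric sums (for positive and negative $n$), obtaining $Z = (1 + e^{-\epsilon})/(1 - e^{-\epsilon})$. Then, for any real $c \geq 0$, by symmetry of the Laplace distribution around $0$ and summing a geometric series,
\[
  \Pr[\,|\nu| > c\,]
  \;=\; \tfrac{2}{Z}\sum_{n \geq \lfloor c\rfloor + 1} e^{-\epsilon n}
  \;=\; \tfrac{2\, e^{-\epsilon(\lfloor c\rfloor + 1)}}{Z(1 - e^{-\epsilon})}
  \;=\; \tfrac{2\, e^{-\epsilon(\lfloor c\rfloor + 1)}}{1 + e^{-\epsilon}}.
\]
Since $\lfloor c \rfloor + 1 > c$ and $2/(1 + e^{-\epsilon}) \le 2$, one pushes the exponent down to $c$ and absorbs the leading factor into the small ``off by one'' slack provided by $\lfloor c \rfloor + 1 - c$, yielding $\Pr[\,|\nu| > c\,] \leq e^{-\epsilon c}$. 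Substituting $c = \tfrac{1}{\epsilon}\ln\tfrac{1}{\beta}$ makes the right-hand side exactly $\beta$, which is the desired tail bound.

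The only mildly delicate step is the last one: for the continuous Laplace, $\Pr[\,|\nu| > c\,] = e^{-\epsilon c}$ holds on the nose, but for the discrete version one has to do a bit of bookkeeping around the integer ceiling of $c$ so that the constant $2/(1+e^{-\epsilon})$ does not blow the bound up. This is the main (and only) obstacle, and it is handled by a short calculation rather than any conceptual argument. Everything else---reducing to the noise variable, symmetry, and summing the geometric tail---is routine.
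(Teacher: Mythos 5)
Your reduction to a tail bound on the noise $\nu$, the computation of the normalizing constant $Z = (1 + e^{-\epsilon})/(1 - e^{-\epsilon})$, and the geometric-series evaluation
\[
  \Pr[\,|\nu| > c\,] \;=\; \frac{2\, e^{-\epsilon(\lfloor c \rfloor + 1)}}{1 + e^{-\epsilon}}
\]
are all correct. The gap is the final ``absorption'' step: you assert that the prefactor $2/(1 + e^{-\epsilon})$ can always be offset by the slack $\theta := \lfloor c \rfloor + 1 - c$ in the exponent, yielding $\Pr[\,|\nu| > c\,] \leq e^{-\epsilon c}$. But $\theta$ ranges over $(0,1]$ and can be arbitrarily close to $0$, whereas $2/(1 + e^{-\epsilon}) > 1$ is bounded away from $1$ for any fixed $\epsilon > 0$. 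The inequality $2/(1 + e^{-\epsilon}) \leq e^{\epsilon\theta}$ that you need therefore fails whenever $\theta < \frac{1}{\epsilon}\ln\frac{2}{1 + e^{-\epsilon}}$. Concretely, take $\epsilon = 1$ and $c = 0.99$ (so $\beta = e^{-0.99} \approx 0.372$): then $\Pr[\,|\nu| > 0.99\,] = \Pr[\,|\nu| \geq 1\,] = \frac{2 e^{-1}}{1 + e^{-1}} \approx 0.538$, which is strictly larger than $e^{-\epsilon c} \approx 0.372$.

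This is not merely a flaw in your write-up: your own formula shows that the sharp accuracy guarantee for the paper's \emph{discrete} Laplace distribution is $|x - t| \leq \frac{1}{\epsilon}\ln\frac{2}{\beta(1 + e^{-\epsilon})}$ with probability at least $1 - \beta$, which is slightly weaker than the stated bound. The bound $|x - t| \leq \frac{1}{\epsilon}\ln\frac{1}{\beta}$ holds exactly for the \emph{continuous} Laplace, where $\Pr[\,|X| > c\,] = e^{-\epsilon c}$ is an identity, and the paper (which gives no proof of this proposition) appears to be importing that continuous tail. You should either derive the correct discrete tail and flag the discrepancy in constants, or switch to the continuous mechanism; the ``short calculation'' you defer to for closing the discrete gap does not exist, because the inequality it would have to establish is false.
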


Besides private primitives, the other main tools for constructing
private programs are the \emph{composition theorems}. These results
describe the privacy level for a combination of private
programs---say, calls to the Laplace mechanism.  We will use a bit of
notation for compositions. Let $\{ f_i \}$ be a set of $n$ functions
of type $A \to D \to \Dist(A)$. Denote the $n$-fold composition $f^n :
A \to D \to \Dist(A)$ by
\[
  f^k(a, d) =
  \begin{cases}
    \mathsf{unit}\ a &: k = 0 \\
    \mathsf{bind}\ f^{k - 1}(a, d)\ f_k (-, d) &: k \geq 1 .
  \end{cases}
\]
Here, $\mathsf{unit} : A \to \Dist(A)$ and $\mathsf{bind} : \Dist(A)
\to (A \to \Dist(B)) \to \Dist(B)$ are the monadic operations for
distributions. They satisfy the following equalities:
$$\mathsf{unit}(a)(b)=
\left\{\begin{array}{ll}
1 & \mbox{if}~a=b \\
0 & \mbox{otherwise}
\end{array}\right.
$$
and for $f:A \to\Dist(B)$ and $F:A\rightarrow B\rightarrow\Dist(C)$ ,
$$
(\mathsf{bind}~f~F)(a)(c)=\sum_{b} f(a)(b)~F(a)(b)(c) .
$$
We will also use this composition notation when the functions $\{ f_i \}$ have
type $A \to \Dist(A)$, simply dropping the parameter $d$ above when defining
$f^n : A \to \Dist(A)$.

Then, the most basic composition theorem in differential
privacy is \emph{sequential composition}.

\begin{theorem}[Sequential composition]
  \label{thm:seqcomp-dp}
Let $f_i : A \to D \to \Dist(A)$ be a sequence of $n$ functions, such that
for every fixed $a \in A$, the functions $f_i(a) : D \to \Dist(A)$ are
$(\epsilon_i, \delta_i)$-differentially private for some adjacency relation on $D$.
Then for every initial value $a \in A$, the composition $f^n(a) : D \to
\Dist(A)$ is $(\epsilon^*, \delta^*)$-differentially private for
\[
  \epsilon^* = \sum_{i = 1}^n \epsilon_i
  \quad \text{and} \quad
  \delta^* = \sum_{i = 1}^n \delta_i .
\]
That is, the epsilons and deltas sum up through composition.
\end{theorem}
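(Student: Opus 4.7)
The plan is to proceed by induction on $n$. The base case $n=0$ is immediate: $f^{0}(a,d)=\mathsf{unit}(a)$ is independent of $d$, so it is trivially $(0,0)$-differentially private.

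For the inductive step, the first move is to switch from the set-based definition of $(\epsilon,\delta)$-DP to an equivalent density-ratio characterization: $M:D\to\Dist(A)$ is $(\epsilon,\delta)$-DP with respect to $\Phi$ iff for every adjacent $d,d'$ there is a ``bad event'' $E_{d,d'}\subseteq A$ with $\Pr_{y\leftarrow M(d)}[y\in E_{d,d'}]\le \delta$ such that on its complement the pointwise density ratio satisfies $M(d)(y)/M(d')(y)\le e^{\epsilon}$. This reformulation is what will make the $\delta_i$'s add (rather than multiply) under composition.

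Next, I would enrich the mechanism to record the whole trajectory, defining $\tilde{f}^n:A\to D\to\Dist(A^n)$ that outputs the entire sequence $(b_1,\ldots,b_n)$ rather than only the final state. Since $f^n$ is the post-composition of $\tilde{f}^n$ with the projection onto the last coordinate, DP of $\tilde{f}^n$ entails DP of $f^n$ by the standard post-processing property. The density of $\tilde{f}^n$ factors as
\[
\tilde{f}^n(a,d)(b_1,\ldots,b_n)=\prod_{i=1}^{n}f_i(b_{i-1},d)(b_i),
\]
with $b_0=a$. Applying the density-ratio characterization step-by-step yields history-dependent bad events $E_i\subseteq A$ with $\Pr[b_i\in E_i\mid b_{<i}]\le\delta_i$ and per-step ratios bounded by $e^{\epsilon_i}$ outside $E_i$. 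A union bound gives $\Pr[\exists i.\,b_i\in E_i]\le \sum_i\delta_i=\delta^*$ under $\tilde{f}^n(a,d)$, and off this union the product of per-step ratios is at most $\prod_i e^{\epsilon_i}=e^{\epsilon^*}$. Re-applying the ratio characterization in the reverse direction concludes $(\epsilon^*,\delta^*)$-DP for $\tilde{f}^n$.

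The main obstacle is precisely the need for the ratio reformulation. A naive bind-style chaining of $\Pr_{y\leftarrow f_n(b,d)}[y\in S]\le e^{\epsilon_n}\Pr_{y\leftarrow f_n(b,d')}[y\in S]+\delta_n$ with the inductive DP hypothesis on $f^{n-1}(a,\cdot)$---the latter lifted to $[0,1]$-valued integrands by a layer-cake argument---inevitably produces a slack $\delta$-term of the form $e^{\epsilon_n}\delta^{<n}+\delta_n$, strictly weaker than the desired $\delta^{<n}+\delta_n$. Routing the argument through pointwise density ratios and a union bound over per-step failure events is what recovers the tight additive bound on $\delta^*$.
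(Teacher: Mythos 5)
The paper states this sequential-composition theorem without proof (it is a standard result from the privacy literature), so there is no in-paper argument to compare against; what follows is an assessment of your proposal on its own terms.

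Your high-level strategy --- pass to the trajectory mechanism $\tilde f^n$, decompose each step, union-bound the per-step failures, and bound the density ratio on the good set --- is the standard and correct way to get the tight $\delta^* = \sum_i \delta_i$, and you correctly diagnose why naive bind-style chaining only gives $e^{\epsilon_n}\delta^{<n} + \delta_n$. However, the proposal hinges on a claimed equivalence that is false. You assert that $M$ is $(\epsilon,\delta)$-DP \emph{iff} for each adjacent pair $(d,d')$ there is a \emph{set} $E \subseteq A$ with $\Pr_{y\leftarrow M(d)}[y\in E]\le \delta$ such that $M(d)(y)\le e^{\epsilon}M(d')(y)$ for all $y\notin E$. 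The ``if'' direction is fine; the ``only if'' direction --- which you need in order to convert the hypothesis on each $f_i$ into the per-step decomposition --- does not hold. Concretely, take $A=\{0,1\}$, $\epsilon=0$, $M(d)=(\tfrac12+\delta,\tfrac12-\delta)$, $M(d')=(\tfrac12,\tfrac12)$. One checks $\Delta_0(M(d),M(d'))=\delta$, so $M$ is $(0,\delta)$-DP between $d$ and $d'$; yet the only atom violating the ratio bound is $0$, which has $M(d)$-mass $\tfrac12+\delta>\delta$, so no admissible bad event exists. What $(\epsilon,\delta)$-DP actually licenses is not deleting a low-probability \emph{event} but shaving off a low-mass \emph{sub-distribution}: $\Delta_\epsilon(P,Q)\le\delta$ is equivalent to the existence of $P_0\le P$ pointwise with $P(A)-P_0(A)\le\delta$ and $P_0(y)\le e^\epsilon Q(y)$ for all $y$ (take $P_0(y)=\min(P(y),e^\epsilon Q(y))$). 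Running your trajectory argument with per-step sub-distributions $\hat f_{i,b}\le f_i(b,d)$ instead of per-step bad events, the product $\hat\mu(\vec b)=\prod_i \hat f_{i,b_{i-1}}(b_i)$ retains mass at least $\prod_i(1-\delta_i)\ge 1-\sum_i\delta_i$ and satisfies $\hat\mu(\vec b)\le e^{\sum_i\epsilon_i}\,\tilde f^n(a,d')(\vec b)$, which gives $\Delta_{\epsilon^*}(\tilde f^n(a,d),\tilde f^n(a,d'))\le\delta^*$ and hence the theorem after post-processing. So the architecture of your argument is salvageable, but the step you flag as the key move --- converting DP to a set-based bad-event form and back --- is exactly where the proof as written breaks; it must be replaced by the sub-distribution (equivalently, the $\Delta_\epsilon$-divergence) decomposition, which is in fact the notion the paper already works with.
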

The sequential composition theorem is quite useful, and is the main principle
supporting modular verification of differential privacy. However, there is
another composition theorem, known as \emph{advanced composition} \citep{DRV10}.
Instead of summing up the privacy costs, this theorem gives slower growth of
$\epsilon$ in exchange for increasing the $\delta$ parameter. Advanced
composition is an extremely common tool for analyzing differentially private
algorithms, but it is not supported by most formal verification systems today.

\begin{theorem}[Advanced composition]
  \label{thm:advcomp-dp}
  Let $f_i : A \to D \to \Dist(A)$ be a sequence of $n$ functions, such that
  for every fixed $a \in A$, the functions $f_i(a) : D \to \Dist(A)$ are
  $(\epsilon, \delta)$-differentially private for some adjacency relation on $D$.
  Then, for every $a \in A$ and $\omega \in (0, 1)$, the composed function
  $f^n(a) : D \to \Dist(A)$ is $(\epsilon^*, \delta^*)$-differentially private for
  \[
    \epsilon^* = \left(\sqrt{2 n \ln(1/\omega)}\right) \epsilon +
    n \epsilon(e^\epsilon - 1)
    \quad \text{and} \quad
    \delta^* = n \delta + \omega .
  \]
  In particular, if we have $\epsilon' \in (0, 1)$, $\omega \in (0, 1/2)$, and
  \[
    \epsilon = \frac{\epsilon'}{2 \sqrt{2 n \ln(1/\omega)}} ,
  \]
  a short calculation\iffull\footnote{%
    First, note that $e^\epsilon - 1 \leq 2 \epsilon$ for $\epsilon \in (0, 1)$;
    this follows by convexity of $e^\epsilon - 2\epsilon - 1$.  Then, we have:
    \begin{align*}
      \sqrt{2 n \ln(1/\omega)} \epsilon + n \epsilon(e^\epsilon - 1)
      &\leq \sqrt{2 n \ln(1/\omega)} \epsilon + 2 n \epsilon^2 \\
      &= \frac{\epsilon'}{2} + \frac{\epsilon'}{2} \cdot \frac{\epsilon'}{2 \ln(1/\omega)} \\
      &\leq \frac{\epsilon'}{2} + \frac{\epsilon'}{2} = \epsilon',
    \end{align*}
    where the last inequality is because $\omega \in (0, 1/2)$ and $\epsilon'
    \in (0, 1)$, and the last factor is maximized at $\epsilon' = 1$ and $\omega
    = 1/2$:
    \[
      \frac{\epsilon'}{2 \ln(1/\omega)} \leq \frac{1}{2 \ln (2)} < 1 .
    \]}\fi{}
  shows that the function $f^n$ is $(\epsilon', \delta^*)$-differentially
  private.
\end{theorem}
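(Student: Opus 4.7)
The plan is to follow the classical martingale-based argument of Dwork, Rothblum, and Vadhan. First I would reduce to the case $\delta = 0$: a standard lemma says that any $(\epsilon,\delta)$-DP mechanism can be viewed as an $\epsilon$-DP mechanism on all but a set of output probability mass at most $\delta$, so by a union bound over the $n$ steps, at the cost of an additive $n\delta$ in the final $\delta^*$, I may assume each $f_i(a)$ is in fact $\epsilon$-DP. Fix adjacent inputs $d, d' \in D$ and a starting state $a$. For an output sequence $\vec{y} = (y_1, \ldots, y_n)$ of $f^n$, define the privacy loss random variable
\[
  L(\vec{y}) = \ln \frac{\Pr[f^n(a,d) = \vec{y}]}{\Pr[f^n(a,d') = \vec{y}]} = \sum_{i=1}^n L_i(\vec{y}),
\]
where $L_i$ uses the one-step conditional distribution of $y_i$ given the prefix $y_{<i}$ (which determines the intermediate state). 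It suffices to show that $\Pr_{\vec{y} \sim f^n(a,d)}[L(\vec{y}) > \epsilon^*] \leq \omega$, since a pointwise log-ratio bound of $\epsilon^*$ on all but an $\omega$-fraction implies $(\epsilon^*, \omega)$-DP by a standard argument.

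Next I would exploit the two key properties of each increment $L_i$, viewed as a function on $\vec{y} \sim f^n(a,d)$. By $\epsilon$-DP of $f_i$, pointwise $|L_i| \leq \epsilon$; and by a KL-type calculation for $\epsilon$-DP distributions (a short convexity computation using $\ln x \leq x - 1$), one obtains $\mathbb{E}[L_i \mid y_{<i}] \leq \epsilon(e^\epsilon - 1)$. Setting $M_k = \sum_{i \leq k} \bigl(L_i - \mathbb{E}[L_i \mid y_{<i}]\bigr)$, the sequence $(M_k)$ is a martingale with bounded differences of magnitude at most $2\epsilon$. Applying the Azuma–Hoeffding inequality yields
\[
  \Pr\bigl[M_n > \sqrt{2 n \ln(1/\omega)} \cdot \epsilon\bigr] \leq \omega,
\]
and combining with the conditional expectation bound gives $L \leq n\epsilon(e^\epsilon - 1) + \sqrt{2 n \ln(1/\omega)} \cdot \epsilon = \epsilon^*$ except on an event of mass $\omega$.

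Finally, I would translate this high-probability log-ratio bound back into $(\epsilon^*, \omega)$-DP, then reinstate the reduction's contribution to obtain $(\epsilon^*, n\delta + \omega) = (\epsilon^*, \delta^*)$-DP. The specialisation to $\epsilon = \epsilon'/(2\sqrt{2n \ln(1/\omega)})$ is the short algebraic calculation sketched in the footnote: one checks $n\epsilon(e^\epsilon-1) \leq 2n\epsilon^2 \leq \epsilon'/2$ and $\sqrt{2n\ln(1/\omega)}\epsilon = \epsilon'/2$. The main obstacle I expect is the expectation bound $\mathbb{E}[L_i \mid y_{<i}] \leq \epsilon(e^\epsilon-1)$: it is not immediate from $\epsilon$-DP pointwise, and requires the symmetric comparison between forward and reverse privacy losses together with the inequality $e^{x} + e^{-x} - 2 \leq x(e^x - 1)$ (or equivalent KL-divergence bound). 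A secondary subtlety is justifying that the conditional distributions underlying $L_i$ are well-defined even though the state $a$ evolves stochastically through the composition, which requires unfolding the monadic $\mathsf{bind}$ and reasoning per-trajectory rather than just on the final marginal.
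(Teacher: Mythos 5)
The paper does not actually prove this theorem: it is recalled from \citet{DRV10} (cited in the sentence introducing advanced composition), and the only proof content the paper supplies is the footnote verifying the arithmetic of the specialization $\epsilon = \epsilon'/(2\sqrt{2n\ln(1/\omega)}) \Rightarrow (\epsilon',\delta^*)$-DP. Your sketch reconstructs the DRV martingale argument, which is precisely the source being cited, so you have not taken a different route from the paper so much as filled in the citation. Your treatment of the footnote calculation matches the paper's footnote.

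Two small points worth tightening. First, the claim that the martingale $(M_k)$ has ``bounded differences of magnitude at most $2\epsilon$'' and that Azuma then gives $\Pr[M_n > \epsilon\sqrt{2n\ln(1/\omega)}] \leq \omega$ do not quite line up: with $|M_k - M_{k-1}| \leq 2\epsilon$, the usual Azuma bound $\exp(-t^2/(2\sum c_i^2))$ only gives $\omega^{1/4}$ at $t = \epsilon\sqrt{2n\ln(1/\omega)}$. What saves the constant is that each increment $L_i - \mathbb{E}[L_i \mid y_{<i}]$ lies in the interval $[-\epsilon - c_i,\, \epsilon - c_i]$ with $c_i = \mathbb{E}[L_i \mid y_{<i}] \geq 0$, i.e.\ in an interval of \emph{length} $2\epsilon$, so the Hoeffding-style form $\exp(-2t^2/\sum (b_i - a_i)^2)$ applies and yields exactly $\omega$. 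Second, the reduction from $(\epsilon,\delta)$-DP to $\epsilon$-DP at a cost of $n\delta$ is genuinely where the adaptivity bites: you need the characterization of $(\epsilon,\delta)$-DP in terms of a pointwise $\epsilon$-bounded privacy loss holding with probability $1-\delta$ \emph{per step, conditionally on the adversary's past view}, rather than just on the output marginal. You flag this as a ``secondary subtlety,'' but it is actually the technically heaviest part of making the reduction rigorous, and it is the reason DRV work with approximate max divergence rather than a naive union bound over output sets.
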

\citet{DBLP:journals/corr/OhV13} propose sharper versions of this composition
theorem, including a provably optimal version and a version for the heterogeneous
case when the privacy level $(\epsilon_i, \delta_i)$ may depend on $i$. We
will use \cref{thm:advcomp-dp} for simplicity, but our techniques enable other
composition theorems to be easily plugged in.

\subsection{Approximate liftings}
While the definition of differential privacy seems to be a straightforward
property about probabilities in two distributions, a recent line of work
initiated by \citet{BartheKOZ13} and subsequently developed
\citep{BartheO13,BGGHS16} shows that differential privacy is a consequence of an
approximate version of probabilistic coupling, called \emph{approximate
  liftings}. Couplings are a long-standing tool in probability theory for
analyzing pairs of distributions, but the relation between differential privacy
and approximate couplings is still being explored.

Unlike couplings, where there is a single accepted definition, several
incomparable notions of approximate liftings have been proposed.  The first
definition is by \citet{BartheKOZ13} but has some technical shortcomings; we
will use a more recent definition by \citet{BartheO13}. We begin by defining a
distance on distributions, closely related to $(\epsilon, \delta)$-differential
privacy.
\begin{definition}[\citet{BartheO13}]
Let $\epsilon\geq 0$. The $\epsilon$-\emph{DP divergence}
$\Delta_{\epsilon}(\mu_1,\mu_2)$ between two distributions $\mu_1\in\Dist(A)$
and $\mu_2\in\Dist(A)$ is defined as
\[
  \sup_{S\subseteq A} \left(\Pr_{x\leftarrow \mu_1}[x\in S] - \exp(\epsilon)
    \Pr_{x\leftarrow \mu_2} {[x\in S]}\right) .
\]
\end{definition}
For the connection to differential privacy, it is not hard to see that if $M : D
\to \Dist(A)$, then $M$ is $(\epsilon, \delta)$-differentially private iff for
every pair of adjacent inputs $d, d'$, we have $\Delta_\epsilon(M(d), M(d'))
\leq \delta$.  This distance is also central to the definition of \emph{approximate
  liftings}.
\begin{definition}[\citet{BartheO13}] \label{def:approx-lift}
Two distributions $\mu_1\in\Dist(A_1)$ and $\mu_2\in\Dist(A_2)$
are related by the $(\epsilon,\delta)$-\emph{lifting} of
$\Psi\subseteq A_1\times A_2$, written $\mu_1
\alift{\Psi}{(\epsilon,\delta)} \mu_2$, if there exist two \emph{witness}
distributions $\mu_L\in\Dist(A_1\times A_2)$ and
$\mu_R\in\Dist(A_1\times A_2)$ such that
\begin{enumerate}
\item $\pi_1(\mu_L)=\mu_1$ and $\pi_2(\mu_R)=\mu_2$;
\item $\supp(\mu_L)\subseteq \Psi$ and  $\supp(\mu_R)\subseteq \Psi$; and
\item $\Delta_\epsilon(\mu_L,\mu_R)\leq\delta$.
\end{enumerate}
\end{definition}
Approximate liftings generalize several concepts for relating distributions.
When $\mu_L = \mu_R$, we have a $(0, 0)$-lifting, sometimes called an exact
\emph{probabilistic lifting}. Such a lifting, with any $\Psi$, implies a
\emph{probabilistic coupling} between $(\mu_1, \mu_2)$.

Approximate liftings satisfy the following property, also known as the
fundamental lemma of approximate liftings.
\begin{lemma}[\citet{BartheO13}]
Let
$E_1\subseteq B_1$, $E_2\subseteq B_2$, $\mu_1\in\Dist(B_1)$ and
  $\mu_2\in\Dist(B_2)$.  Let
$$\Psi = \{ (x_1,x_2) \in B_1\times B_2 \mid x_1\in E_1 \Rightarrow
  x_2\in E_2 \} .$$
If $\mu_1 \alift{\Psi}{(\epsilon,\delta)} \mu_2$, then
$$\Pr_{x_1\leftarrow\mu_1} [x_1\in E_1] \leq \exp(\epsilon)
\Pr_{x_2\leftarrow\mu_2} [x_2\in E_2] +\delta .$$
\end{lemma}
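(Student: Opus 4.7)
My plan is to unfold the definition of the lifting, choose the right test set for the $\epsilon$-DP divergence, and then chain three easy inequalities. Let $\mu_L, \mu_R \in \Dist(B_1 \times B_2)$ be the witness distributions guaranteed by $\mu_1 \alift{\Psi}{(\epsilon,\delta)} \mu_2$, so that $\pi_1(\mu_L) = \mu_1$, $\pi_2(\mu_R) = \mu_2$, $\supp(\mu_L), \supp(\mu_R) \subseteq \Psi$, and $\Delta_\epsilon(\mu_L, \mu_R) \leq \delta$.

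First, I would rewrite the left-hand side as a probability under $\mu_L$. By the first marginal condition,
\[
\Pr_{x_1 \leftarrow \mu_1}[x_1 \in E_1]
= \Pr_{(x_1, x_2) \leftarrow \mu_L}[x_1 \in E_1] .
\]
The key observation is that because $\supp(\mu_L) \subseteq \Psi$, every pair $(x_1, x_2)$ of positive mass under $\mu_L$ satisfies the implication $x_1 \in E_1 \Rightarrow x_2 \in E_2$. Hence the events $\{x_1 \in E_1\}$ and $\{x_1 \in E_1 \wedge x_2 \in E_2\}$ coincide $\mu_L$-almost surely, so
\[
\Pr_{(x_1,x_2) \leftarrow \mu_L}[x_1 \in E_1]
= \Pr_{(x_1,x_2) \leftarrow \mu_L}[(x_1,x_2) \in E_1 \times E_2] .
\]

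Next, I would apply the $\epsilon$-DP divergence bound with the specific test set $S = E_1 \times E_2 \subseteq B_1 \times B_2$. By definition of $\Delta_\epsilon$ and the hypothesis $\Delta_\epsilon(\mu_L, \mu_R) \leq \delta$,
\[
\Pr_{\mu_L}[E_1 \times E_2]
\leq \exp(\epsilon) \, \Pr_{\mu_R}[E_1 \times E_2] + \delta .
\]
Finally, I would bound $\Pr_{\mu_R}[E_1 \times E_2] \leq \Pr_{\mu_R}[B_1 \times E_2]$ by monotonicity, and use the second marginal condition to rewrite this as $\Pr_{x_2 \leftarrow \mu_2}[x_2 \in E_2]$. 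Chaining the three steps yields the desired inequality.

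The only slightly subtle point, and what I would be careful about, is the very first rewriting step that collapses $\{x_1 \in E_1\}$ to $\{x_1 \in E_1 \wedge x_2 \in E_2\}$ under $\mu_L$: this is exactly where the implicational shape of $\Psi$ is used, and it is the reason we can apply the divergence bound with the product set $E_1 \times E_2$ rather than with $E_1 \times B_2$ (which would not give a useful bound on the right since the second marginal of $\mu_R$ would then see all of $B_2$). Everything else is monotonicity and marginalization, so there is no serious obstacle.
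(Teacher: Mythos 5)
Your argument is correct. The paper itself does not reproduce a proof of this lemma—it is cited directly from \citet{BartheO13}—but your derivation is precisely the standard one: unfold the two witnesses, use $\pi_1(\mu_L)=\mu_1$ to pass to the joint space, use $\supp(\mu_L)\subseteq\Psi$ to rewrite $\{x_1\in E_1\}$ as $E_1\times E_2$ up to a $\mu_L$-null set, invoke $\Delta_\epsilon(\mu_L,\mu_R)\leq\delta$ with the test set $E_1\times E_2$, and finish by monotonicity and $\pi_2(\mu_R)=\mu_2$. (As a small remark, an equivalent variant takes the test set $E_1\times B_2$ and uses the support condition on $\mu_R$ instead of $\mu_L$ to shrink the right-hand probability; either of the two support conditions suffices, and you only actually need one of them here.)
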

Using this lemma, one can prove that differential privacy is
equivalent to a particular form of approximate lifting.
\begin{proposition}[\citet{BartheO13}] \label{prop:lift-fund}
  A probabilistic computation $M:D\rightarrow\Dist(A)$ is $(\epsilon,
  \delta)$-differentially private for adjacency relation $\Phi$ iff
  \[
    M(a)~\alift{=}{(\epsilon, \delta)}~M(a')
  \]
  for every two adjacent inputs $a$ and $a'$.
\end{proposition}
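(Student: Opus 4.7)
The proposition is an equivalence, so I would split the proof into two directions and handle each separately. Neither direction requires deep calculation: the key observation is that when $\Psi$ is equality on $A$, any distribution on $A \times A$ supported on $\Psi$ is essentially a distribution on $A$ embedded along the diagonal, and under this identification both the marginals and the DP-divergence become transparent.

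For the forward direction ($\Rightarrow$), assume $M$ is $(\epsilon, \delta)$-differentially private for $\Phi$, and fix adjacent $a, a' \in D$. I would construct the witnesses explicitly by placing all the mass on the diagonal: define $\mu_L, \mu_R \in \Dist(A \times A)$ by
\[
  \mu_L(x, y) = \begin{cases} M(a)(x) & x = y \\ 0 & x \neq y \end{cases}
  \qquad
  \mu_R(x, y) = \begin{cases} M(a')(x) & x = y \\ 0 & x \neq y. \end{cases}
\]
Conditions (1) and (2) of \cref{def:approx-lift} are then immediate. For condition (3), any $T \subseteq A \times A$ intersects the joint support only on the diagonal, so $\mu_L(T) = M(a)(S)$ and $\mu_R(T) = M(a')(S)$ where $S = \{x : (x,x) \in T\}$. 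Hence $\Delta_\epsilon(\mu_L, \mu_R) = \Delta_\epsilon(M(a), M(a'))$, and this is at most $\delta$ by differential privacy.

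For the backward direction ($\Leftarrow$), assume the lifting exists for every pair of adjacent inputs, and fix $S \subseteq A$ and adjacent $a, a'$. I would apply the fundamental lemma of approximate liftings with $E_1 = E_2 = S$. The relation $\Psi_S = \{(x_1, x_2) : x_1 \in S \Rightarrow x_2 \in S\}$ contains the equality relation on $A$, so any witnesses for the equality lifting are automatically witnesses for a $\Psi_S$-lifting (the marginals and DP-divergence are unaffected by enlarging the target relation). The fundamental lemma then gives
\[
  \Pr_{x \leftarrow M(a)}[x \in S] \leq \exp(\epsilon) \Pr_{x \leftarrow M(a')}[x \in S] + \delta,
\]
and since $S$ was arbitrary, $M$ is $(\epsilon, \delta)$-differentially private.

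The only subtlety worth flagging is making sure that the diagonal-embedded witnesses in the forward direction really do yield $\Delta_\epsilon(\mu_L, \mu_R) = \Delta_\epsilon(M(a), M(a'))$; the supremum over $T \subseteq A \times A$ is attained on sets contained in the diagonal, since mass off the diagonal is zero in both witnesses. Beyond that, the argument is essentially bookkeeping, and the main conceptual content is the identification of equality-lifting witnesses with distributions supported on the diagonal.
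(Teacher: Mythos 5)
Your proof is correct. The paper states this proposition as a citation to \citet{BartheO13} and does not give its own proof, so there is no in-text argument to compare against; but your two-direction argument is the standard one and checks out. The forward direction's diagonal witnesses $\mu_L, \mu_R$ are genuine distributions on $A \times A$ supported on the equality relation, with the right marginals, and your observation that $\Delta_\epsilon(\mu_L, \mu_R) = \Delta_\epsilon(M(a), M(a'))$ (because all mass sits on the diagonal, so the supremum over $T \subseteq A \times A$ reduces to a supremum over subsets of $A$) is exactly the bookkeeping needed. The backward direction correctly uses that the equality relation is contained in $\Psi_S = \{(x_1, x_2) : x_1 \in S \Rightarrow x_2 \in S\}$, so an equality-lifting witness pair is automatically a $\Psi_S$-lifting witness pair (the support condition only relaxes, and the marginal and divergence conditions are unchanged), at which point the fundamental lemma stated just above the proposition in the paper closes the argument. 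One minor stylistic note: the paper already remarks right after defining $\Delta_\epsilon$ that $M$ is $(\epsilon,\delta)$-DP iff $\Delta_\epsilon(M(d), M(d')) \leq \delta$ for all adjacent $d, d'$; citing that remark would let you state the forward direction's conclusion in one line rather than unwinding the supremum by hand.
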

Approximate liftings form the basis of the program logic \Saprhl, to which we
turn next.

\subsection{The relational program logic}

The logic \Saprhl, originally proposed by \citet{BartheKOZ13}, is a relational
program logic for verifying differential privacy. We take this logic as our
point of departure; we briefly recall the main points here.

We consider a simple imperative language with random sampling, oracle
calls and adversary calls; the latter two are new to the present work. The
set of commands is defined inductively:
\[
\begin{array}{r@{\ \ }l@{\quad}l}
\Cmd ::= & \Skip                   & \text{noop} \\
     \mid& \Seq{\Cmd}{\Cmd}        & \text{sequencing}\\
     \mid& \Ass{\Var}{\Expr}       & \text{deterministic assignment}\\
     \mid& \Rand{\Var}{\Lap_\epsilon(\Expr)}
                                   & \text{Laplace mechanism}\\
     \mid& \Cond{\Expr}{\Cmd}{\Cmd} & \text{conditional}\\
     \mid& \WWhile{\Expr}{\Cmd}      & \text{while loop} \\
     \mid& \Call{(\Var,\ldots,\Var)}{\mathcal{A}}{\Expr,\ldots,\Expr} &
\text{adversary call} \\
     \mid& \Call{(\Var,\ldots,\Var)}{\mathcal{O}}{\Expr,\ldots,\Expr} &
\text{procedure call}

\end{array}
\]
where $\Var$ is a set of \emph{variables} and $\Expr$ is a set of
\emph{expressions}. Variables and expressions are typed, and range
over standard types like booleans, integers, databases, queries,
lists, etc. We omit the semantics of expressions, which is
standard. Commands are interpreted as maps $\Mem \to \Dist(\Mem)$;
this is also a standard choice (e.g., see
\citet{BartheKOZ13}).\footnote{%
  We will assume that commands are terminating on all executions. The logic
  \Saprhl can also reason about possibly non-terminating programs by working
  with sub-distributions instead of distributions.}
We will write $\dsem{m}{c}$ to mean the output distribution of command $c$,
executed on input memory $m$.

An \Saprhl judgment has the form
\[
  \AEquiv{c}{c'}{\Phi}{\Psi}{\epsilon}{\delta} .
\]
Reminiscent of Hoare logic, $\Phi$ represents the pre-condition while $\Psi$
represents the post-condition. Both $\Phi$ and $\Psi$ are first order formulas
over the program variables. For expressing relational properties, program
variables are tagged with either $\sidel$ or $\sider$ to indicate whether they
belong to $c$ or $c'$ respectively. For instance, we can assert that the
variable $x$ differs by at most $1$ in the two runs with the assertion $|x\sidel
- x\sider| \leq 1$.

Crucially, the post-condition $\Psi$ is interpreted as an approximate lifting
over the output distributions. More formally, the judgment is \emph{valid} iff
for every two memories $m_1$ and $m_2$ such that $m_1 \mathrel{\Phi} m_2$, we have
\[
(\dsem{m_1}{c_1})
\mathrel{\alift{\Psi}{(\epsilon,\delta)}}
(\dsem{m_2}{c_2}) .
\]
We present selected rules, taken from prior presentations of \Saprhl
\citep{BartheKOZ13,BGGHS16} in \cref{fig:aprhl}; $FV(\Phi)$ denotes the set of
program variables in the assertion $\Phi$, and $MV(c)$ denotes the set of
program variables that are modified (i.e., written) by program $c$. Many of the
rules bear a resemblance to the standard Hoare logic rules. The rules
$\textsc{[Assn]}$ and $\textsc{[Cond]}$ are relational versions of the
assignment and conditional rules; note that \textsc{[Cond]} assumes that the two
guards are equal in the pre-condition. The rule \textsc{[Seq]} reflects the
composition principle of approximate liftings, where the indices $\epsilon$ and
$\delta$ add; this rule generalizes the standard composition theorem of
differential privacy. The rule \textsc{[While]} extends this reasoning to loops
with a bound number of iterations, again assuming that the guards are equal in
both programs.

The next two rules, \textsc{[LapNull]} and \textsc{[LapGen]}, are for relating
two sampling instructions from the Laplace distribution. Intuitively \textsc{[LapNull]}
models adding identical noise on both sides, so that the distance between the
samples $(y_1\sidel, y_2\sider)$ is equal to the distance between the means
$(e_1\sidel, e_2\sider)$. \textsc{[LapGen]} is a general rule for assuming that
the two samples are shifted and related by $y_1\sidel + k = y_2\sider$; the
privacy cost depends on how far the means $(e_1\sidel + k, e_2\sider)$ are.

The final group of rules are the structural rules. Besides the usual rules for
consequence and framing (\textsc{[Conseq]} and \textsc{[Frame]}), the most
interesting rule is the \emph{pointwise equality} rule \textsc{[PW-Eq]}. This
rule proves differential privacy by showing a pointwise judgment for each
possible output value $i$, and is the key tool for supporting privacy proofs
beyond the standard composition theorems.

\begin{figure*}[t]
  \[
\begin{array}{c}
  \inferrule*[Left=Assn]
  { }
  { \AEquiv
    {\Ass{x_1}{e_1}}{\Ass{x_2}{e_2}}
    {\Psi\subst{x_1\sidel,x_2\sider}{e_1\sidel,e_2\sider}}
    {\Psi}{0}{0} }
  \\[2ex]
  \inferrule*[Left=Cond]
  { \AEquiv{c_1}{c_2}{\Phi \land b_1\sidel}{\Psi}{\epsilon}{\delta} \\
    \AEquiv{c_1'}{c_2'}{\Phi \land \lnot b_1\sidel}{\Psi}{\epsilon}{\delta} }
  {\AEquiv{\Cond{b_1}{c_1}{c_1'}}{\Cond{b_2}{c_2}{c_2'}}
    {\Phi\land b_1\sidel = b_2\sider}{\Psi}{\epsilon}{\delta}}
  \\[2ex]
  \inferrule*[Left=Seq]
  { \AEquiv{c_1}{c_2}{\Phi}{\Psi'}{\epsilon}{\delta} \\
    \AEquiv{c_1'}{c_2'}{\Psi'}{\Psi}{\epsilon'}{\delta'} }
  { \AEquiv{c_1;c_1'}{c_2;c_2'}{\Phi}{\Psi}{\epsilon + \epsilon'}{\delta + \delta'} }
  \\[2ex]
  \inferrule*[Left=While]
  { \AEquiv
    {c_1}{c_2}
    {\Theta \land b_1\sidel \land b_2\sider \land e\sidel = k}
    {\Theta \land b_1\sidel = b_2\sider \land e\sidel < k}
    {\epsilon_k}{\delta_k}
    \\
    \models \Theta \land e\sidel \leq 0 \to \neg b_1\sidel }
  { \AEquiv
    {\WWhile{b_1}{c_1}}{\WWhile{b_2}{c_2}}
    {\Theta \land b_1\sidel = b_2\sider \land e\sidel \leq n}
    {\Theta \land \neg b_1\sidel \land\neg b_2 \sider}
    {\sum_{k=1}^{n}\epsilon_k}{\sum_{k=1}^{n}\delta_k} }
  \\[2ex]
  \inferrule*[Left=LapNull]
  { y_1 \notin FV(e_1) \qquad y_2 \notin FV(e_2) }
  { \AEquiv {\Rand{y_1}{\Lap_\epsilon(e_1)}} {\Rand{y_2}{\Lap_\epsilon(e_2)}}
    {\top} {y_1\sidel - y_2\sider = e_1\sidel - e_2\sider}{0}{0}}
  \\[2ex]
  \inferrule*[Left=LapGen]
  { }
  { \AEquiv
    {\Rand{y_1}{\Lap_\epsilon(e_1)}}{\Rand{y_2}{\Lap_\epsilon(e_2)}}
    {|k + e_1\sidel - e_2\sider| \leq k'}
    {y_1\sidel+k=y_2\sider}
    {k' \cdot \epsilon}{0}}
  \\[2ex]
  \inferrule*[Left=Conseq]
  { \AEquiv{\Phi'}{c_1}{c_2}{\Psi'}{\epsilon'}{\delta'} \\
    \models \Phi \to \Phi'  \\ \models \Psi' \to \Psi \\ \epsilon' \leq \epsilon \\ \delta' \leq \delta }
  { \AEquiv{c_1}{c_2}{\Phi}{\Psi}{\epsilon}{\delta} }
  \\[2ex]
  \inferrule*[Left=Frame]
  { \AEquiv{c_1}{c_2}{\Phi}{\Psi}{\epsilon}{\delta} \\
    FV(\Theta) \cap MV(c_1, c_2) = \emptyset }
  { \AEquiv{c_1}{c_2}{\Phi \land \Theta}{\Psi \land \Theta}{\epsilon}{\delta} }
  \\[2ex]
  \inferrule*[Left=PW-Eq]
  { \forall i.
    \AEquiv{c_1}{c_2}{\Phi}{x\sidel = i \to x\sider = i}{\epsilon}{\delta_i}}
  {\AEquiv{c_1}{c_2}{\Phi}{x\sidel=x\sider}{\epsilon}{\sum_{i\in I} \delta_i}}
\end{array}
\]
\caption{Selected proof rules of \Saprhl \citep{BartheKOZ13,BGGHS16}}\label{fig:aprhl}
\end{figure*}

\subsection{The union bound logic}

When reasoning about privacy, we will sometimes need to prove probabilistic
bounds on accuracy.  Since accuracy properties are not relational, we cannot
verify them in \Saprhl.  There is a long history of research for formally
verifying probabilistic properties, and we are free to choose any of
these techniques to interface with our logic. In our favor, we are interested in
simple accuracy properties of the form $\Pr[ \Psi ] < \beta$, where $\Psi$ is an
assertion on the program memory. We call such assertions \emph{bad event
  assertions}, since they state that the probability of some event $\Psi$---the
``bad event''---is at most $\beta$. We will prove accuracy assertions in the
Hoare logic \Sahl \citep{BGGHS16-icalp}, which is specialized to prove
bad event assertions.

We will highlight just the features of \Sahl needed for our purposes; readers
should consult \citet{BGGHS16-icalp} for a complete presentation. Concretely,
\Sahl judgments have the following form:
\[
  \ahl{c}{\Phi}{\Psi}{\beta} ,
\]
where $\Phi$ and $\Psi$ are (non-probabilistic) assertions over program
memories, and $\beta \in [0, 1]$ is a real-valued index. Assertions in \Sahl are
non-relational, and mention program variables from a \emph{single} memory
instead of program variables tagged with $\sidel$ or $\sider$. To mediate
between the non-relational assertions of \Sahl and the relational assertions of
\Saprhl, from a non-relational assertion $\Phi$ we can define relational
assertions $\Phi\sidel$ and $\Phi\sider$ by interpreting $\Phi$ where all
program variables are tagged with $\sidel$ or $\sider$ respectively.

The semantics of commands is unchanged from \Saprhl; we interpret commands as
maps $\Mem \to \Dist(\Mem)$. The above judgement means: for any initial
memory satisfying $\Phi$, the probability that $\neg \Psi$ holds in the
resulting distribution on memories is at most $\beta$.  For instance, the
accuracy specification of the Laplace mechanism (\cref{prop:lap-acc}) is given
by the following \Sahl judgment:
\[
  \ahl{\Rand{y}{\Lap_\epsilon(e)}}
  {\top}{ |y - e| \leq \frac{1}{\epsilon} \log \frac{1}{\beta} }{\beta}
\]
for every $\beta \in (0, 1)$.

\section{Accuracy-dependent privacy} \label{sec:accpriv}

Let us begin with our first class of private algorithms, where privacy follows
from an \emph{accuracy} property. For our purposes, these accuracy properties
are non-relational probabilistic properties that hold on a single execution of a
single program. For instance, the assertion $\Pr[ x > 0 ] < 0.2$, stating that
the probability $x$ is positive is at most $0.2$, is an accuracy property.
Accuracy properties appear in privacy proofs in a variety of ways. For instance,
they may imply that the privacy cost $\epsilon$ is smaller than expected. Or,
privacy may be conditional: if the accuracy property holds then we have
differential privacy, otherwise the algorithm fails and there is no guarantee.
Programs in the latter case satisfy $(\epsilon, \delta)$-differential privacy,
where the probability of failure is included in $\delta$.

\subsection{Up-to-bad reasoning}
To integrate accuracy assertions into \Saprhl, we will use a technique from
cryptographic verification: \emph{up-to-bad} reasoning. Roughly speaking, rather
than directly proving the equality lifting corresponding to differential
privacy:
\[
  \mu_1 \alift{(=)}{(\epsilon, \delta)} \mu_2 ,
\]
we will prove a conditional, \emph{up-to-bad} lifting:
\[
  \mu_1 \alift{\{ (x_1, x_2) \mid (\neg \Phi(x_1, x_2) \to x_1 = x_2) \}}{(\epsilon, \delta)} \mu_2 .
\]
Here, $\Phi$ is an assertion involving just variables from one side. Roughly
speaking, the lifting shows that if the \emph{bad event} $\Phi$ does not hold, then we
have differential privacy. Then, we conclude the proof with a structural rule
that combines the bad event assertion---proved externally in \Sahl---with the
up-to-bad lifting, removing the bad event while adjusting the privacy parameters
$(\epsilon, \delta)$.

\begin{figure*}
  \[
    \begin{array}{c}
      \inferrule* [Left=UtB-L]
      { \models \Phi \to \Phi_0\sidel \\
        \AEquiv{c}{c'} { \Phi } { \Theta\sidel \to e\sidel = e\sider}{\epsilon}{\delta} \\
        \ahl{c}{\Phi_0}{\Theta}{\delta'} }
      { \AEquiv{c}{c'} { \Phi } { e\sidel = e\sider }{\epsilon}{\delta + \delta'} }
      \\[2ex]
      \inferrule* [Left=UtB-R]
      { \models \Phi \to \Phi_0\sider \\
       \AEquiv{c}{c'} { \Phi } { \Theta\sider \to e\sidel = e\sider}{\epsilon}{\delta} \\
        \ahl{c'}{\Phi_0}{\Theta}{\delta'} }
      { \AEquiv{c}{c'} { \Phi } { e\sidel = e\sider }{\epsilon}{\delta + e^\epsilon \delta'} }
    \end{array}
  \]
  \caption{Up-to-bad rules}
  \label{fig:other-utb}
\end{figure*}

To support this reasoning in our program logic, we propose the two rules in
\cref{fig:other-utb}.  The rules, \textsc{[UtB-L]} and \textsc{[UtB-R]},
internalize an approximate version of up-to-bad reasoning. If the assertion
$\Theta$ holds, then we have the $(\epsilon, \delta)$-lifting of equality. So,
if we know the probability of $\neg \Theta$ is at most $\delta'$, then we can
show the $(\epsilon, \delta + \delta')$-differential privacy when $\Theta$ is a
property of the first run, and $(\epsilon, \delta + e^\epsilon
\delta')$-differential privacy when $\Theta$ is a property of the second run.
The asymmetry in the left and right versions of the rule reflects the asymmetric
definition of approximate lifting, which is in turn inspired by the asymmetric
definition of differential privacy.

In order to include these rules, we show that they are valid. To prove the
equality lifting for privacy, we would like to use the equivalence in
\cref{prop:lift-fund}. However, there is a catch: we only know that the
distributions over $e$ are differentially private---the distributions over the
whole memory may not be differentially private. Therefore, we will use a new
property of approximate liftings: they are well-behaved when mapping the
underlying distribution.

\begin{restatable}{proposition}{propliftER} \label{prop:lift-ER}
  For a function $f : A \to B$, let $\lift{f} : \Dist(A) \to \Dist(B)$ denote
  function lifted to a map on distributions. If $f$ is surjective, and $R$ is a
  relation on $B$, then
  \[
    \mu_1
    \mathrel{\alift{ \{ (x_1, x_2) \mid f(x_1) \mathrel{R} f(x_2) \}}
    {(\epsilon, \delta)}}
    \mu_2
  \]
  if and only if
  \[
    \lift{f}(\mu_1)
    \mathrel{\alift{ \{ (y_1, y_2) \mid y_1 \mathrel{R} y_2 \}}{(\epsilon,
    \delta)}}
    \lift{f}(\mu_2) .
  \]

  In particular, if we have a set $E$ of equivalence classes of $A$ and the
  distribution $\mu/E : \Dist(E)$ represents the probability of being in each
  equivalence class, taking $f : A \to E$ mapping an element to its equivalence
  class and $R$ to be the equivalence relation gives a result by \citet[Proposition
  8]{BartheO13}:
  \[
    \mu_1 \mathrel{\alift{(=_E)}{(\epsilon, \delta)}} \mu_2
    \iff
    \mu_1/E \mathrel{\alift{(=)}{(\epsilon, \delta)}} \mu_2/E .
  \]
\end{restatable}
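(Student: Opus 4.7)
The plan is to prove both implications by exhibiting explicit witness distributions: the forward direction by pushforward along $f \times f$, and the backward direction by a fiberwise redistribution that respects the prescribed marginals $\mu_1$ and $\mu_2$. The ``In particular'' claim will then follow as an immediate instance, taking $f$ to be the quotient map $A \to E$ and $R$ to be equality on $E$.

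For the forward direction, given witnesses $\mu_L, \mu_R \in \Dist(A \times A)$ for the pulled-back relation $\{(x_1,x_2) \mid f(x_1) \mathrel{R} f(x_2)\}$, I take $\nu_L := \lift{(f \times f)}(\mu_L)$ and $\nu_R := \lift{(f \times f)}(\mu_R)$. Functoriality of pushforward ($\pi_i \circ \lift{(f \times f)} = \lift{f} \circ \pi_i$) yields the marginal conditions, the support condition is immediate, and for any $S \subseteq B \times B$ the pushforward identity $\Pr_{\nu}[S] = \Pr_{\mu}[(f \times f)^{-1}(S)]$ gives
\[
\Pr_{\nu_L}[S] - e^\epsilon \Pr_{\nu_R}[S] = \Pr_{\mu_L}[(f \times f)^{-1}(S)] - e^\epsilon \Pr_{\mu_R}[(f \times f)^{-1}(S)] \leq \Delta_\epsilon(\mu_L, \mu_R) \leq \delta .
\]

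For the backward direction, given witnesses $\nu_L, \nu_R \in \Dist(B \times B)$, I propose
\[
\mu_L(x_1, x_2) := \nu_L(f(x_1), f(x_2)) \cdot \frac{\mu_1(x_1) \, \mu_2(x_2)}{\lift{f}(\mu_1)(f(x_1)) \, \lift{f}(\mu_2)(f(x_2))} ,
\]
and $\mu_R$ identically with $\nu_R$ in place of $\nu_L$ (zero denominators are harmless: when $\lift{f}(\mu_i)(y) = 0$ the numerator $\mu_i(x)$ already vanishes on $f^{-1}(y)$). The support condition is inherited from $\nu_L, \nu_R$, and the marginal condition follows from $\pi_1(\mu_L)(x_1) = \frac{\mu_1(x_1)}{\lift{f}(\mu_1)(f(x_1))} \sum_{y_2} \nu_L(f(x_1), y_2) = \mu_1(x_1)$ using $\pi_1(\nu_L) = \lift{f}(\mu_1)$; symmetrically $\pi_2(\mu_R) = \mu_2$. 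Surjectivity of $f$ ensures that each $y \in B$ has a genuine fiber, so the construction is unambiguous.

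The delicate step, and the main obstacle, is the distance bound. For any $T \subseteq A \times A$ set $r(y_1, y_2) := \sum_{(x_1, x_2) \in T \cap (f^{-1}(y_1) \times f^{-1}(y_2))} \mu_1(x_1) \mu_2(x_2) / (\lift{f}(\mu_1)(y_1) \lift{f}(\mu_2)(y_2))$, which lies in $[0, 1]$ since the full fiber sum equals $1$. A direct expansion gives
\[
\Pr_{\mu_L}[T] - e^\epsilon \Pr_{\mu_R}[T] = \sum_{(y_1, y_2)} r(y_1, y_2) \bigl(\nu_L(y_1, y_2) - e^\epsilon \nu_R(y_1, y_2)\bigr) ,
\]
and since $r \in [0, 1]$, the right-hand side is maximized by taking $r = 1$ exactly on the set where $\nu_L(y_1,y_2) > e^\epsilon \nu_R(y_1,y_2)$, yielding at most $\Delta_\epsilon(\nu_L, \nu_R) \leq \delta$. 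The symmetric ``product conditioning'' in the construction is precisely what allows \emph{both} marginals to come out right simultaneously, and the $r \in [0, 1]$ observation is what transfers the supremum over subsets of $A \times A$ back to the supremum over subsets of $B \times B$.
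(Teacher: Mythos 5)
Your forward direction is identical to the paper's. The backward direction uses the same core idea (a fiberwise product-conditional redistribution), but your proposed witness is not in general a valid witness, and the gap is precisely the edge case you dismiss as ``harmless.''

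Recall that a witness pair $(\nu_L, \nu_R)$ is constrained in only one marginal each: $\pi_1(\nu_L) = \lift{f}(\mu_1)$ and $\pi_2(\nu_R) = \lift{f}(\mu_2)$, while $\pi_2(\nu_L)$ and $\pi_1(\nu_R)$ are unconstrained. Consequently $\nu_L$ may place positive mass on pairs $(b_1, b_2)$ with $\lift{f}(\mu_2)(b_2) = 0$. On the corresponding fibers your formula for $\mu_L$ reads $\nu_L(b_1,b_2) \cdot 0/0$; setting $0/0 = 0$ does not merely make the formula well-defined, it \emph{discards} the mass $\nu_L(b_1, b_2)$ entirely. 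Your marginal computation then fails, because the inner step $\sum_{x_2 \in f^{-1}(y_2)} \mu_2(x_2)/\lift{f}(\mu_2)(y_2) = 1$ only holds when $\lift{f}(\mu_2)(y_2) > 0$, so what you actually obtain is
\[
  \pi_1(\mu_L)(x_1) = \frac{\mu_1(x_1)}{\lift{f}(\mu_1)(f(x_1))}
  \sum_{y_2 \,\in\, \supp(\lift{f}(\mu_2))} \nu_L(f(x_1), y_2) ,
\]
which is strictly less than $\mu_1(x_1)$ whenever $\nu_L(f(x_1), \cdot)$ leaks outside $\supp(\lift{f}(\mu_2))$; the symmetric defect afflicts $\pi_2(\mu_R)$. (A tiny example: two-element fibers over $B = \{b_1, b_2\}$, $\mu_1$ and $\mu_2$ point masses whose images are $b_1$ and $b_2$ respectively, $\nu_L$ splitting its mass between $(b_1,b_1)$ and $(b_1,b_2)$, and $\nu_R$ concentrated on $(b_1,b_2)$; your $\mu_L$ then has total mass $1/2$.) The paper's construction agrees with yours on fibers where both $\mu_1([a_1]_f)$ and $\mu_2([a_2]_f)$ are positive, but when $\mu_2([a_2]_f) = 0$ it concentrates the orphaned $\nu_L$-mass on a single chosen representative $\rho([a_2]_f)$ rather than spreading it, and symmetrically for $\mu_R$. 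This salvages the marginals, at the cost that $\mu_L$ and $\mu_R$ no longer share a common conditional factor on the exceptional fibers, so your clean $r \in [0,1]$ argument for the distance bound does not apply verbatim. It can still be made to go through: on fibers where the two conditional factors differ, one of $\nu_L(b_1,b_2)$ or $\nu_R(b_1,b_2)$ is already zero, which is exactly the observation the paper uses to close the pointwise distance bound.
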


This result allows us to prove an approximate lifting for a distribution over
memories by proving an approximting lifting for the distribution over a single
variable or expression. We defer the details of the proof to \thefullversion. Now,
we are ready to show soundness of the up-to-bad rules.

\begin{theorem}
  The rules \textsc{[UtB-L]} and \textsc{[UtB-R]} are sound.
\end{theorem}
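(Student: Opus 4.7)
The plan is to reduce the target approximate lifting of $e\sidel = e\sider$ on memories to an $\epsilon$-DP divergence bound between the pushforward distributions on the value of $e$. Let $f(m) = e(m)$; by \cref{prop:lift-ER} applied to $f$, the conclusion $\dsem{m_1}{c} \alift{=_e}{(\epsilon, \delta+\delta')} \dsem{m_2}{c'}$ is equivalent to $\lift{f}(\dsem{m_1}{c}) \alift{=}{(\epsilon, \delta+\delta')} \lift{f}(\dsem{m_2}{c'})$, and analogously with $\delta + e^\epsilon \delta'$ for \textsc{[UtB-R]}. By \cref{prop:lift-fund} this in turn is just an $\epsilon$-DP divergence bound, so for every pair $m_1 \mathrel{\Phi} m_2$ and every set $S$ of values it suffices to show
\[
  \Pr_{\mu_1}[e \in S] \le \exp(\epsilon)\, \Pr_{\mu_2}[e \in S] + \delta + \delta'
\]
(with the obvious variant for \textsc{[UtB-R]}), where $\mu_1 = \dsem{m_1}{c}$ and $\mu_2 = \dsem{m_2}{c'}$. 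Thus I can work entirely at the level of probabilities.

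For \textsc{[UtB-L]}, split the left-hand probability on the bad event $\Theta$:
\[
  \Pr_{\mu_1}[e \in S] \le \Pr_{\mu_1}[e \in S \wedge \Theta] + \Pr_{\mu_1}[\neg \Theta] .
\]
The second summand is bounded by $\delta'$ by the \Sahl judgment, using $\Phi \to \Phi_0\sidel$ on $m_1$. For the first summand, apply the fundamental lemma of approximate liftings to the hypothesis judgment with $E_1 = \{m \mid e(m) \in S \wedge \Theta(m)\}$ and $E_2 = \{m \mid e(m) \in S\}$: the post-condition $\Theta\sidel \to e\sidel = e\sider$ precisely certifies that $m_1 \in E_1$ forces $m_2 \in E_2$, yielding $\Pr_{\mu_1}[e \in S \wedge \Theta] \le e^\epsilon \Pr_{\mu_2}[e \in S] + \delta$. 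Combining the two estimates gives the desired bound.

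For \textsc{[UtB-R]}, the bad event lives on the right program, which flips the direction of the accounting. Apply the fundamental lemma to the hypothesis with $E_1 = \{m \mid e(m) \in S\}$ and $E_2 = \{m \mid e(m) \in S \vee \neg \Theta(m)\}$: under the post-condition $\Theta\sider \to e\sidel = e\sider$, a pair $(m_1, m_2)$ with $m_1 \in E_1$ must have either $\neg \Theta(m_2)$ or $e(m_2) = e(m_1) \in S$, so $m_2 \in E_2$. This gives $\Pr_{\mu_1}[e \in S] \le e^\epsilon \Pr_{\mu_2}[e \in S \vee \neg \Theta] + \delta$, and a union bound together with the \Sahl bound $\Pr_{\mu_2}[\neg \Theta] \le \delta'$ produces the desired $\delta + e^\epsilon \delta'$. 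The factor $e^\epsilon$ in front of $\delta'$ appears exactly because the right-hand probability is scaled inside the DP divergence.

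The main obstacle is choosing $E_1, E_2$ so that the implication required by the fundamental lemma is genuinely discharged by the relevant post-condition, and correctly tracking the asymmetry of $\Delta_\epsilon$: in \textsc{[UtB-R]} the union bound has to happen inside the $e^\epsilon$ factor, which amplifies the bad-event probability. This asymmetry is intrinsic to the definition of approximate lifting and forces the two rules to have different conclusions. Reducing the target lifting to a DP divergence statement via \cref{prop:lift-ER} sidesteps the more delicate route of constructing witness distributions for the full equality lifting by hand.
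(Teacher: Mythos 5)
Your proof is correct and produces the same probability estimates as the paper's. The underlying decomposition agrees: reduce the lifting on memories to a DP-divergence statement on the pushforward of $e$ via \cref{prop:lift-ER} and \cref{prop:lift-fund}, then split on the bad event $\Theta$. The difference is one of packaging. The paper unfolds the witness distributions $\mu_L, \mu_R$ of the hypothesis lifting and applies the marginal, support, and distance conditions one at a time; you instead invoke the fundamental lemma of approximate liftings once, with explicitly chosen event pairs ($E_1 = \{e \in S\} \cap \Theta$, $E_2 = \{e \in S\}$ for \textsc{[UtB-L]}; $E_1 = \{e \in S\}$, $E_2 = \{e \in S\} \cup \neg\Theta$ for \textsc{[UtB-R]}). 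This is a cleaner way to write the same argument, and making the choice of events the visible content nicely isolates where the asymmetry between the two rules originates (whether the bad-event split is absorbed before or after the $e^\epsilon$ scaling). One step you leave implicit: before the fundamental lemma applies, the hypothesis post-condition $\Theta\sidel \to e\sidel = e\sider$ (resp.\ $\Theta\sider \to \cdots$) must be weakened to the implication relation $\{(m_1,m_2) \mid m_1 \in E_1 \Rightarrow m_2 \in E_2\}$; this is sound since approximate liftings are monotone in the relation (only the support condition in \cref{def:approx-lift} depends on it, and it is preserved under enlarging the relation), but it is worth stating.
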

\begin{proof}
  We will start with \textsc{[Utb-L]}. Take any two memories $(m_1, m_2)$ such
  that $(m_1, m_2) \models \Phi$, and let $\mu_1, \mu_2$ be $\dsem{m_1}{c}$ and
  $\dsem{m_2}{c'}$ respectively. Note that $m_1 \models \Phi_0$. By validity of
  the premise, we know
  \[
    \Pr_{m \sim \mu_1} [ \neg \Theta ] \leq \delta'
  \]
  and we have a pair of witnesses $\mu_L, \mu_R$ for the relation
  \[
    R \triangleq \Theta\sidel \to e\sidel = e\sider ,
  \]
  such that $\Delta_\epsilon (\mu_L, \mu_R) \leq \delta$.  Our goal is to show
  that the marginal distributions of $\denot{e}$ in $\mu_1, \mu_2$ satisfy
  $(\epsilon, \delta + \delta')$-differential privacy, i.e. for any set $S$,
  \[
    \Pr_{m \sim \mu_1} [ \dsem{m}{e} \in S ]
    \leq e^\epsilon \Pr_{m' \sim \mu_2} [ \dsem{m'}{e} \in S ] + \delta +
    \delta' .
  \]
  To begin, we know that
  \begin{align*}
    \Pr_{m \sim \mu_1} [ \dsem{m}{e} \in S ]
    &= \Pr_{m \sim \mu_1} [ \dsem{m}{e} \in S \land m \models {\Theta} ] \\
    &+ \Pr_{m \sim \mu_1} [ \dsem{m}{e} \in S \land m \models {\neg \Theta}  ] \\
    &\leq \Pr_{m \sim \mu_1} [ \dsem{m}{e} \in S \land m \models {\Theta} ]
    + \delta'
  \end{align*}
  since the probability of $\neg \Theta$ in $\mu_1$ is at most $\delta'$. Now,
  we can conclude with the coupling:
  \begin{align*}
    &\Pr_{m \sim \mu_1} [ \dsem{m}{e} \in S \land m \models {\Theta} ] +
    \delta' \\
    &= \Pr_{(m,m') \sim \mu_L} [ \dsem{m}{e} \in S \land m \models {\Theta} ]
    + \delta' \\
    &\leq e^\epsilon \Pr_{(m,m') \sim \mu_R} [ \dsem{m}{e} \in S \land m \models
    {\Theta} ] + \delta + \delta' \\
    &\leq e^\epsilon \Pr_{(m,m') \sim \mu_R} [ \dsem{m}{e}' \in S ] + \delta + \delta' \\
    &= e^\epsilon \Pr_{m' \sim \mu_2} [ \dsem{m}{e}' \in S ] + \delta + \delta'
    ,
  \end{align*}
  where the first inequality uses $\Delta_\epsilon(\mu_L, \mu_R) \leq \delta$,
  while the second inequality uses $(m, m') \in \supp(\mu_R)$. So, the
  distributions of $\denot{e}$ satisfy differential privacy. By
  \cref{prop:lift-fund} and \cref{prop:lift-ER} with equivalence classes defined
  by the value of $\denot{e}$, we can conclude soundness of \textsc{[UtB-L]}.

  We can show soundness of \textsc{[UtB-R]} in a similar way. Let $\mu_1, \mu_2$
  be as above. We can use the coupling as follows:

  \begin{align*}
    &\Pr_{m \sim \mu_1} [ \dsem{m}{e} \in S ] \\
    &= \Pr_{(m, m') \sim \mu_L} [ \dsem{m}{e} \in S ] \\
    &\leq e^\epsilon \Pr_{(m, m') \sim \mu_R} [ \dsem{m}{e} \in S ] + \delta \\
    &= e^\epsilon \Pr_{(m, m') \sim \mu_R} [ \dsem{m}{e} \in S \land m' \models {\Theta} ] \\
    &+ e^\epsilon \Pr_{(m, m') \sim \mu_R} [ \dsem{m}{e} \in S \land m' \models
    {\neg \Theta} ] + \delta \\
    &\leq e^\epsilon \Pr_{(m, m') \sim \mu_R} [ \dsem{m}{e}' \in S ]
    + e^\epsilon \Pr_{(m, m') \sim \mu_R} [ m' \models {\neg \Theta} ] +
    \delta \\
    &= e^\epsilon \Pr_{m' \sim \mu_2} [\dsem{m'}{e} \in S ]
    + e^\epsilon \Pr_{m' \sim \mu_2} [ m' \models {\neg \Theta} ] + \delta \\
    &\leq e^\epsilon \Pr_{m' \sim \mu_2} [\dsem{m'}{e} \in S ] + \delta + e^\epsilon \delta'
  \end{align*}
  The first inequality uses the bound on the distance between the witnesses:
  $\Delta_\epsilon(\mu_L, \mu_R) \leq \delta$.  The second inequality uses the
  support of $\mu_R$. The final inequality uses the fact that $\Pr_{m' \sim
    \mu_2} [ m' \models {\neg \Theta} ] \leq \delta'$. Since the distributions
  of $\denot{e}$ in $\mu_1, \mu_2$ satisfy $(\epsilon, \delta + e^\epsilon
  \delta')$-differential privacy, we can again use \cref{prop:lift-fund} and
  \cref{prop:lift-ER} to show \textsc{[UtB-R]} is sound.
\end{proof}

\subsection{Propose-Test-Release} \label{sec:other}
To give a small example of up-to-bad reasoning, we can prove privacy for the
\emph{Propose-Test-Release (PTR)} framework~\citep{DworkL09,ThakurtaS13}, a
classic example of privacy depending on an accuracy guarantee.  The goal is to
release the answer to a function $f$. Rather than adding noise directly to the
answer (which may be non-numeric), PTR estimates the \emph{distance to
  instability} of the database $d$ with respect to $f$, denoted
$\mathit{DistToInst}_f(d)$. This quantity measures the distance from $d$ to the
closest database $d'$ such that $f(d) \neq f(d')$, where adjacent databases are
at distance $1$. We will use the following two properties of
$\mathit{DistToInst}$:
\begin{align*}
  & \mathit{DistToInst}_f(d) > 1 \to \forall d'.\; (Adj(d, d') \to f(d) = f(d'))
  \\
  & Adj(d, d') \to |\mathit{DistToInst}_f(d) - \mathit{DistToInst}_f(d')| \leq
  1
\end{align*}

Since the distance itself is private information, PTR first adds noise
to the distance, calling the noisy result $dist$. If it is large
enough, then PTR returns the value of $f$ with no noise. Otherwise,
PTR returns a default value $\bot$. In code:
\[
\begin{array}{l}
  \Rand{dist}{\Lap_{\epsilon}(\mathit{DistToInst}_f(d))}; \\
  \Condt{dist > \ln(1/\delta)/\epsilon + 1}{}\\
  \quad \Ass{r}{f(d)};  \\
  \mathsf{else} \\
  \quad \Ass{r}{\bot}; \\
  \mathsf{return}~r
\end{array}
\]

The key to the privacy of PTR is that if the noise when estimating $dist$ is not
too large, then there are two cases. If $dist$ is large, then there is no
privacy cost for releasing the value of $q$ when $dist$ is large. Otherwise, we
return a default value $\bot$ revealing nothing. In either case, we just have
privacy cost $\epsilon$ from computing the noisy distance.  If the noise when
estimating $dist$ is too large (happening with probability at most $\delta$), we
may violate privacy. So, we get an $(\epsilon, \delta)$-private algorithm.

\begin{theorem}
  PTR is $(\epsilon, \delta)$-differentially private for $\epsilon, \delta > 0$.
\end{theorem}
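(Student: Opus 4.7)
The plan is to apply the up-to-bad rule \textsc{[UtB-L]} from \cref{fig:other-utb}, with good event
\[
  \Theta \triangleq (\mathit{DistToInst}_f(d) > 1) \lor (\mathit{dist} \leq \ln(1/\delta)/\epsilon + 1),
\]
capturing the intuition that either the database is clearly stable or the noisy distance is too small to trigger the release of $f(d)$. Under $\Theta\sidel$ I will establish the conditional equality lifting with cost $(\epsilon, 0)$, and separately prove in \Sahl that $\Pr[\neg \Theta] \leq \delta$ on the output of PTR. Combining these two premises via \textsc{[UtB-L]} and then reading off differential privacy from the resulting equality lifting through \cref{prop:lift-fund} yields the claimed $(\epsilon,\delta)$-differential privacy of $r$.

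For the conditional lifting, I would couple the two Laplace samples via the \textsc{[LapGen]} rule with shift $k = 0$. Since $\mathit{DistToInst}_f$ is $1$-sensitive on adjacent databases (a stated property), the side-condition $|\mathit{DistToInst}_f(d)\sidel - \mathit{DistToInst}_f(d)\sider| \leq 1$ holds, so we may take $k' = 1$ and pay cost $\epsilon$, giving post-condition $\mathit{dist}\sidel = \mathit{dist}\sider$. With matching guards, \textsc{[Cond]} reduces the rest to analyzing the two branches: the else-branch assigns $\bot$ on both sides, while in the then-branch we have $\mathit{dist}\sidel > \ln(1/\delta)/\epsilon + 1$, so $\Theta\sidel$ forces $\mathit{DistToInst}_f(d)\sidel > 1$, which by the stated property of $\mathit{DistToInst}_f$ guarantees $f(d)\sidel = f(d)\sider$. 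In either branch $r\sidel = r\sider$ holds under $\Theta\sidel$, as required.

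For the \Sahl bound, we must show $\Pr[\neg \Theta] \leq \delta$, where $\neg \Theta$ asserts $\mathit{DistToInst}_f(d) \leq 1 \land \mathit{dist} > \ln(1/\delta)/\epsilon + 1$. Since $\mathit{DistToInst}_f(d)$ is determined by the (fixed) input database, a case split on its value reduces the situation to the sampling step: if the input satisfies $\mathit{DistToInst}_f(d) > 1$, the bad event is unsatisfiable; otherwise $\mathit{DistToInst}_f(d) \leq 1$, so $\neg \Theta$ forces the Laplace noise $\nu$ to satisfy $\nu > \ln(1/\delta)/\epsilon$, which by \cref{prop:lap-acc} with $\beta = \delta$ happens with probability at most $\delta$. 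The resulting \Sahl derivation is a framed application of the Laplace accuracy rule, followed by weakening through the deterministic conditional and assignment that follow.

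The main obstacle I anticipate is the careful alignment of the constants baked into the program text with the tail bound. The $+1$ in the program's threshold is exactly what is needed so that, in the case $\mathit{DistToInst}_f(d) \leq 1$, the bad event collapses to the Laplace-tail event $|\nu| > \ln(1/\delta)/\epsilon$ of probability $\delta$; dually, the choice $k = 0$ in \textsc{[LapGen]} is precisely what trades one unit of sensitivity for the relational cost $\epsilon$ while leaving no $\delta$ on the lifting side. Beyond these two bookkeeping choices, the rest of the proof is a straightforward composition of \textsc{[Assn]}, \textsc{[Cond]}, \textsc{[Seq]}, \textsc{[LapGen]}, and \textsc{[UtB-L]}, with no explicit probabilistic reasoning on the relational side.
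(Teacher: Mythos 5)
Your proposal is correct and takes essentially the same route as the paper: couple $\mathit{dist}\sidel = \mathit{dist}\sider$ via \textsc{[LapGen]} with $k = 0$, $k' = 1$ (paying $\epsilon$), so the two runs agree on the branch taken; close the then-branch under the good event using the stated instability property; and remove the good-event hypothesis by \textsc{[UtB-L]} together with a Laplace tail bound contributing $\delta$. The only (cosmetic) divergence is your choice of good event $\Theta$---the paper takes the direct accuracy event $|\mathit{dist} - \mathit{DistToInst}_f(d)| < \ln(1/\delta)/\epsilon$, which is exactly what the \Sahl Laplace-accuracy axiom supplies, whereas yours is the logically weaker consequence of that event tuned to what the then-branch actually needs; both discharge the \Sahl premise at $\delta$ and yield the same $(\epsilon,\delta)$ bound.
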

\begin{proof}[Proof sketch]
  To prove differential privacy, we will use the rule \textsc{[UtB-L]}. We can
  take the event $\Theta$ to hold exactly when the noise is not too large:
  \[
    \Theta \triangleq |dist - \mathit{DistToInst}_f(d)| < \ln(1/\delta)/\epsilon
    .
  \]
  Then, we couple the noise to be the same, so that we take the same branch in
  both runs.  Then, under assumption $\Theta$, we know that if we estimate that the
  distance to instability is large and we take the first branch, then in fact
  $f(d\sidel) = f(d\sider)$ as desired. Finally, using a tail bound for the
  Laplace distribution in \Sahl gives
  \[
    \ahl{c}{\Phi\sidel}{\Theta}{\delta} ,
  \]
  and rule \textsc{[UtB-L]} gives $(\epsilon, \delta)$-differential privacy.
\end{proof}

\section{Advanced composition} \label{sec:advcomp}
Advanced composition is a key tool for giving a more precise privacy
analysis of a composition of private programs. In this section, we extend
\Saprhl with a new loop rule that supports advanced composition for arbitrary
invariants and we show how the rule can be applied for proving privacy of a
non-interactive variant of \EXAMPLE.

Before presenting our solution, we stress that there are some technical
challenges in extending \Saprhl with advanced composition. On the one
hand, \Saprhl derives its expressiveness from its ability to reason
about \emph{arbitrary} approximate liftings, and not simply about
approximate liftings for the equality relation; as a consequence, an
advanced composition rule for \Saprhl should support
approximate liftings in order to be useful (in fact,
an advanced composition rule for just approximate liftings
of equality would be too weak for verifying our running example). On
the other hand, the proof of advanced composition is substantially more
technical than the proof for the sequential composition theorem, using
sophisticated results from fields such as martingale
theory and hypothesis testing. We overcome these obstacles by showing
approximate liftings from a version of differential privacy; as we saw before,
differential privacy is also a consequence of an approximate lifting of
equality.  The key observation is that the two witnesses $\mu_L$ and $\mu_R$
used in the definition of an approximate lifting define a mechanism
$\mu:\mathbb{B}\rightarrow\Dist (A_1\times A_2)$ such that
$\mu(\mathsf{true})=\mu_L$ and $\mu(\mathsf{false})=\mu_R$ where $\Delta_\epsilon
(\mu_L,\mu_R)\leq\delta$ iff $\mu$ is $(\epsilon,\delta)$-differentially
private. Next, we show how to take advantage of this observation.

Since the $(\epsilon, \delta)$ parameters from approximate lifting are defined
by the $\epsilon$-distance $\Delta_\epsilon(\mu_1, \mu_2)$ between the two
witnesses, we will first show an advanced composition theorem for
$\epsilon$-distance.


\begin{proposition}[Advanced composition for $\epsilon$-distance]\label{prop:advcompdist}
Let $f_i,g_i: A \to \Dist(A)$ such that $\Delta_\epsilon(f_i(a), g_i(a))\leq
\delta$ for every $a\in A$. For any $\omega > 0$, let:
\[
  \epsilon^* \triangleq \left(\sqrt{2 n \ln(1/\omega)}\right) \epsilon
    + n \epsilon(e^\epsilon - 1)
  \quad \text{and} \quad
  \delta^* \triangleq n \delta + \omega .
\]
Then for every $n\in\mathbb{N}$ and $a\in A$,
$\Delta_{\epsilon^*}(f^n(a), g^n(a))\leq \delta^*$.
\end{proposition}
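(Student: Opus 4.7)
The plan is to reduce the claim to the advanced composition theorem for differential privacy (Theorem~3.9) by encoding the pair $(f_i, g_i)$ as a single mechanism whose privacy-relevant input is a Boolean selecting between the two families. This is precisely the observation the authors foreshadow in the paragraph above the proposition.

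Concretely, I would first define, for each $i \in \{1, \dots, n\}$, a mechanism $\mu_i : A \to \mathbb{B} \to \Dist(A)$ by
\[
  \mu_i(a)(\mathsf{true}) \triangleq f_i(a),
  \qquad
  \mu_i(a)(\mathsf{false}) \triangleq g_i(a),
\]
and consider the adjacency relation $\Phi \subseteq \mathbb{B} \times \mathbb{B}$ that relates $\mathsf{true}$ and $\mathsf{false}$. By the hypothesis $\Delta_\epsilon(f_i(a), g_i(a)) \leq \delta$ for every $a$, the function $\mu_i(a) : \mathbb{B} \to \Dist(A)$ is $(\epsilon, \delta)$-differentially private for this adjacency; this matches exactly the assumption needed in Theorem~3.9.

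Next, I would check that the $n$-fold composition $\mu^n : A \to \mathbb{B} \to \Dist(A)$, defined using the same $\mathsf{unit}/\mathsf{bind}$ pattern as in the excerpt (with the fixed ``second argument'' being the Boolean), satisfies
\[
  \mu^n(a)(\mathsf{true}) = f^n(a)
  \qquad\text{and}\qquad
  \mu^n(a)(\mathsf{false}) = g^n(a).
\]
This is a straightforward induction on $n$: at each stage, fixing the Boolean to $\mathsf{true}$ selects the $f$-component everywhere, and likewise for $\mathsf{false}$ and the $g$-component. Applying Theorem~3.9 to the family $\{\mu_i\}$ yields that $\mu^n(a)$ is $(\epsilon^*, \delta^*)$-differentially private, which unfolds to $\Delta_{\epsilon^*}(f^n(a), g^n(a)) \leq \delta^*$, as required.

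The main subtlety is conceptual rather than computational: one must be careful that the notion of $n$-fold composition used in advanced composition (where the adversary can adaptively choose subsequent mechanisms based on previous outputs, but all share the same private datum $d$) lines up with the simpler iterated composition of $f^n$ and $g^n$ here, where no adversarial choice is present. Since the $f_i$ and $g_i$ are fixed (non-adaptive), this is essentially automatic, but it is worth noting explicitly that the non-adaptive composition is a special case of the adaptive one, so applying Theorem~3.9 is legitimate. No probabilistic calculation is needed beyond what is already absorbed into the statement of Theorem~3.9.
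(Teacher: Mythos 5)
Your proof is correct and follows essentially the same route as the paper: encode the pair $(f_i, g_i)$ as a single Boolean-indexed mechanism $h_i : A \to \mathbb{B} \to \Dist(A)$, observe that the hypothesis on $\Delta_\epsilon$ is exactly $(\epsilon, \delta)$-differential privacy of $h_i(a)$ with the two Booleans as adjacent inputs, apply the advanced composition theorem, and unfold $h^n(a,\mathsf{true}) = f^n(a)$ and $h^n(a,\mathsf{false}) = g^n(a)$. Your closing remark about adaptivity is a reasonable sanity check but slightly misframed: the $f^n/g^n$ compositions in the proposition and the $f^n$ in the paper's advanced-composition statement both thread state in the same way (the $i$-th step receives the $(i-1)$-th output), so they are literally the same shape of iterated composition rather than one being a restriction of the other.
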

\begin{proof}
Let $h_i: A \to \mathbb{B} \to \Dist(A)$ be such that for every $a\in
A$, $h_i(a, \mathsf{true})=f_i(a)$ and $h_i(a, \mathsf{false})=g_i(a)$.
Then $\Delta_\epsilon(f_i(a), g_i(a))\leq \delta$ iff $h_i(a) :
\mathbb{B} \to \Dist(A)$ is $(\epsilon, \delta)$-differentially
private for every $a \in A$.

By directly applying the advanced composition theorem of differential privacy
(\cref{thm:advcomp-dp}), the function $h^n(a): \mathbb{B} \to \Dist(A)$
is $(\epsilon^*,\delta^*)$-differentially private for each $a \in A$. So for every
$b,b'\in\mathbb{B}$ and $a\in A$, $\Delta_{\epsilon^*} (h^n(a,b),h^n(a, b'))\leq
\delta^*$. Now for every $a\in A$, $h^{n}(a, \mathsf{true})=f^{n}(a)$ and
$h^{n}(a, \mathsf{false})=g^{n}(a)$. Therefore, $\Delta_{\epsilon^*}
(f^n(a),g^n(a))\leq \delta^*$.
\end{proof}

Now that we have an advanced composition for $\epsilon$-distance, it is a simple
matter to extend our result to approximate liftings. Note here that we apply
advanced composition not to the distributions on $A$---which are related by an
approximate lifting, but perhaps \emph{not} related by differential privacy---but
rather to the two \emph{witnesses} of the lifting, distributions on pairs in $A
\times A$.

\begin{proposition}[Advanced composition for lifting]\label{prop:advcomplift}
Let $f_i,f_i':A\rightarrow\Dist(A)$ and $\Phi\subseteq A\times A$ such that
for every $a,a'\in A$, $(a,a')\models\Phi$ implies
\[
  f_i(a) \mathrel{\alift{\Phi}{(\epsilon, \delta)}} f_i'(a') .
\]
Let $n\in\mathbb{N}$ and let $(\epsilon', \delta')$ be as in
\cref{thm:advcomp-dp}.  For any $\omega > 0$, let
\[
  \epsilon' \triangleq \left(\sqrt{2 n \ln(1/\omega)}\right) \epsilon
    + n \epsilon(e^\epsilon - 1)
  \quad \text{and} \quad
  \delta' \triangleq n \delta + \omega .
\]
Then for every $a,a'\in A$ such that $(a,a')\models\Phi$, we have
\[
  f^n(a) \mathrel{\alift{\Phi}{(\epsilon', \delta')}} f'^n(a') .
\]
\end{proposition}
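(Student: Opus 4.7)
The plan is to reduce to \cref{prop:advcompdist} by viewing the two witnesses of each single-step approximate lifting, considered together as a pair, as precisely the data to which advanced composition for $\epsilon$-distance applies on the product state space $A \times A$. The probabilistic heavy lifting has already been done in the preceding proposition; all that remains is some bookkeeping around witnesses, marginals, and supports.

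First I would unpack the hypothesis. For each $i$ and each $(a,a') \in \Phi$, the definition of approximate lifting supplies witness distributions $\mu_L^{(i)}(a,a'), \mu_R^{(i)}(a,a') \in \Dist(A \times A)$ whose supports lie in $\Phi$, whose marginals satisfy $\pi_1(\mu_L^{(i)}(a,a')) = f_i(a)$ and $\pi_2(\mu_R^{(i)}(a,a')) = f_i'(a')$, and with $\Delta_\epsilon(\mu_L^{(i)}(a,a'), \mu_R^{(i)}(a,a')) \le \delta$. Because the supports lie in $\Phi$, I can package these as genuine functions $F_i, G_i : \Phi \to \Dist(\Phi)$ by $F_i(a,a') \triangleq \mu_L^{(i)}(a,a')$ and $G_i(a,a') \triangleq \mu_R^{(i)}(a,a')$.

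Second, I would apply \cref{prop:advcompdist} on the state space $\Phi$ to the sequences $\{F_i\}$ and $\{G_i\}$: the hypothesis $\Delta_\epsilon(F_i(a,a'), G_i(a,a')) \le \delta$ for every $(a,a') \in \Phi$ yields $\Delta_{\epsilon'}(F^n(a,a'), G^n(a,a')) \le \delta'$, with exactly the $(\epsilon',\delta')$ appearing in the statement. The iterated distributions $F^n(a,a')$ and $G^n(a,a')$, both supported in $\Phi$, will serve as the witnesses for the target lifting.

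Third, I would verify the marginal conditions of \cref{def:approx-lift} by induction on $n$. The base case holds by construction of the single-step witnesses. For the inductive step, $F^n(a,a') = \bind{F^{n-1}(a,a')}{F_n}$; the first marginal of $F_n(b,b')$ is $f_n(b)$, which depends only on the first coordinate, so projecting the bind onto the first coordinate gives exactly the bind that defines $f^n(a)$. The symmetric computation shows $\pi_2(G^n(a,a')) = f'^n(a')$. Supports remain inside $\Phi$ automatically since the iteration lives on the state space $\Phi$. The only delicate point is this marginal computation: it is the step that uses the precise form of the single-step witness marginals, namely that $\pi_1$ of the left witness and $\pi_2$ of the right witness each depend on only one coordinate of the input pair. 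Once this is in hand, advanced composition for $\epsilon$-distance does all the probabilistic work, and no direct martingale or concentration argument is required at the lifting level.
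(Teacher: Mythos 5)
Your proposal is correct and follows essentially the same route as the paper's own proof: package the per-step lifting witnesses into two sequences of Markov kernels, apply the advanced composition result for $\epsilon$-distance (\cref{prop:advcompdist}) to those kernels, and then check the support and marginal conditions of the iterated witnesses by induction. The only (cosmetic) difference is that you work directly on the state space $\Phi$, which makes the support condition automatic, whereas the paper works on $A\times A$ and extends the kernels by zero outside $\Phi$; your explicit observation that $\pi_1$ of the left witness depends only on the first coordinate is exactly the fact the paper's ``by induction on $i$'' implicitly relies on.
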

\begin{proof}
We can map any pair $(a, a') \in \Phi$ to the left and right witnesses of the
approximate lifting. That is, there exists $h^l_i,h^r_i:(A\times A) \rightarrow
\Dist (A\times A)$ such that for every $(a,a')\models\Phi$:
\begin{itemize}
\item $\pi_1 (h^l_i(a,a'))=f_i(a)$ and $\pi_2 (h^r_i(a,a'))=f_i'(a')$
\item $\supp(h^l_i(a,a'))\subseteq \Phi$ and $\supp(h^r_i(a,a'))\subseteq
  \Phi$
\item $\Delta_{\epsilon} (h^l_i(a,a'),h^r_i(a,a')) \leq \delta$
\end{itemize}
Without loss of generality, we can assume that $h^l_i(a,a')=h^r_i(a,a')=0$ if
$(a,a')\models \neg\Phi$. By \cref{prop:advcompdist}, we also have
$\Delta_{\epsilon'}(h^l_i(a,a'), h^r_i(a,a'))\leq \delta'$ for every
$i\in\mathbb{N}$ and $(a,a')\models \Phi$. By induction on $i$, for every
$(a,a')\models \Phi$ we have $\supp((h^l)^n(a,a')) \subseteq \Phi$ and
$\supp((h^r)^n(a,a'))\subseteq \Phi$, $\pi_1 ((h^l)^n(a,a'))=f^n(a)$ and $\pi_2
((h^r)^n(a,a'))=f_i'^n(a')$.
\end{proof}

\begin{figure*}
  \[
    \inferrule*[Left=AC-While]
    { \models \Theta \land e\sidel \leq 0 \to \neg b_1\sidel \and
      \epsilon^* \triangleq
        \left(\sqrt{2 n \ln(1/\omega)}\right) \epsilon +
          n \epsilon(e^\epsilon - 1) \and
      \delta^* \triangleq n \delta + \omega \\
       \omega > 0 \\\\
      \AEquiv{c_1}{c_2}
      {\Theta \land b_1\sidel \land b_2\sider \land e\sidel = k}
      {\Theta \land b_1\sidel = b_2\sider \land e\sidel < k}
      {\epsilon}{\delta}
    }
    { \AEquiv
      {\WWhile{b_1}{c_1}}{\WWhile{b_2}{c_2}}
      {\Theta \land b_1\sidel = b_2\sider \land e\sidel \leq n}
      {\Theta \land \neg b_1\sidel \land\neg b_2 \sider}
      {\epsilon^*}{\delta^*} }
  \]
  \caption{Advanced composition rule}\label{fig:adv:comp:while}
\end{figure*}

We remark that our connection between the witnesses of liftings and differential
privacy allows us to directly import other composition theorems of differential
privacy and their proofs without change. For instance,
\citet{DBLP:journals/corr/OhV13} consider two
variants of advanced composition: an optimal variant that provably gives the
best bound on $\epsilon$ and $\delta$, and a heterogeneous variant that allows
$\epsilon$ and $\delta$ be different for the different mechanisms. In
unpublished work, \citet{RRUV16} consider a version of the advanced composition
theorem where the privacy level $\epsilon_i$ and $\delta_i$ for the $i$-th
mechanism may be chosen \emph{adaptively}, i.e., depending on the results from
the first $i - 1$ mechanisms.  These composition theorems are quite tricky to
prove, involving sophisticated tools from martingale theory and hypothesis
testing. We expect that we can internalize all of these composition
theorems---and directly generalize to liftings---with minimal effort.

Based on the previous result, we introduce a new rule
\textsc{[AC-While]} that formalizes advanced composition for
loops.
The soundness for the new rule, which is given in
\cref{fig:adv:comp:while} follows immediately from the results of the
previous section.
\begin{theorem}
  The rule \textsc{[AC-While]}  is sound.
\end{theorem}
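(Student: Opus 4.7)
The plan is to reduce soundness of \textsc{[AC-While]} to \cref{prop:advcomplift} in the same way that the standard \textsc{[While]} rule reduces to sequential composition of approximate liftings. Let $\Psi \triangleq \Theta \land b_1\sidel = b_2\sider$ denote the loop invariant as a relational assertion on pairs of memories $(m_1, m_2)$. The premise of the rule gives a one-iteration approximate lifting with parameters $(\epsilon, \delta)$, and the goal is to iterate this $n$ times while accumulating the privacy cost via advanced composition instead of sequential composition.

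First, I would define the one-step semantic functions $f, f' : \Mem \to \Dist(\Mem)$ by $f(m) \triangleq \dsem{m}{c_1}$ if $m \models b_1$ and $f(m) \triangleq \unit{m}$ otherwise, and symmetrically for $f'$ using $c_2$ and $b_2$. Using the premise together with \textsc{[Conseq]} and the trivial lifting $\unit{m_1} \mathrel{\alift{\Psi}{(0,0)}} \unit{m_2}$ whenever $(m_1,m_2)\models\Psi$, one obtains that whenever $(m_1,m_2) \models \Psi$, we have $f(m_1) \mathrel{\alift{\Psi}{(\epsilon,\delta)}} f'(m_2)$. The case analysis is guided by the invariant: either $b_1\sidel$ and $b_2\sider$ are both true, in which case the premise applies directly, or they are both false (by $b_1\sidel = b_2\sider$), in which case both sides are Dirac distributions on the unchanged memory and $\Psi$ is preserved trivially.

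Next, I would apply \cref{prop:advcomplift} instantiated with $\Phi = \Psi$ to the $n$-fold compositions, yielding $f^n(m_1) \mathrel{\alift{\Psi}{(\epsilon^*, \delta^*)}} f'^n(m_2)$ whenever $(m_1, m_2) \models \Psi$. To conclude, it remains to show that on memories $(m_1, m_2)$ satisfying $\Psi \land e\sidel \leq n$, we have $\dsem{m_i}{\WWhile{b_i}{c_i}} = f^{n}(m_i)$ on the left and likewise on the right. This follows because the premise guarantees that $e\sidel$ strictly decreases in every iteration where the guard is true, so after at most $n$ iterations we reach $e\sidel \leq 0$; the side condition $\Theta \land e\sidel \leq 0 \to \neg b_1\sidel$ then forces $\neg b_1\sidel$, and the invariant forces $\neg b_2\sider$, so the remaining iterations of $f$ and $f'$ act as the identity. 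Hence the semantics of the two while loops coincide with $f^n$ and $f'^n$ and the loops are related by $\Psi \land \neg b_1\sidel \land \neg b_2\sider$, which is exactly the desired postcondition.

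The main obstacle is the semantic bookkeeping that identifies the loop semantics with the bounded $n$-fold composition of the one-step function, because \cref{prop:advcomplift} is stated for fixed-length compositions of total functions on $A$ whereas a while loop can in principle iterate unboundedly. This is handled by the standard trick of padding the computation with identity steps once the guard becomes false, which is sound precisely because $e\sidel$ acts as a strictly decreasing variant bounded by $n$; the rest of the argument is a direct instantiation of \cref{prop:advcomplift}.
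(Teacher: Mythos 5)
Your proposal is correct and follows exactly the route the paper intends; the paper itself only asserts that soundness ``follows immediately from the results of the previous section,'' i.e.\ from \cref{prop:advcomplift}, and your proof is a faithful expansion of that claim. You correctly identify the two technical points the paper glosses over: (i) padding the loop-body semantics to a total one-step function $f$ (identity when the guard is false) so that \cref{prop:advcomplift} applies to a fixed-length $n$-fold composition, justified by the strictly decreasing variant $e\sidel$ together with the side condition $\Theta \land e\sidel \leq 0 \to \neg b_1\sidel$; and (ii) recovering the negated guards in the postcondition from the fact that the marginals of the lifting witnesses are the (terminated) loop-output distributions, which have $\neg b_1$, resp.\ $\neg b_2$, in their supports, and then propagating this to the other component via the invariant $b_1\sidel = b_2\sider$.
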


\section{Interactive privacy} \label{sec:interactive}

So far, we have seen how to incorporate composition theorems and accuracy proofs
into our logic. Now, we consider the last piece needed to verify \EXAMPLE:
proving privacy for \emph{interactive} algorithms. To date, privacy has only
been formally verified for algorithms where the entire input is available in a
single piece; such algorithms are called \emph{offline} algorithms. In contrast,
\emph{interactive} or \emph{online} algorithms accept input piece by piece, in a
finite stream of input, and must produce an intermediate outputs as inputs
arrive.

The differential literature proposes several interactive algorithms; examples
include private counters \citep{DNPR10,CSS10}, the Sparse Vector mechanism, and other
algorithms using these mechanisms \citep{HR10}. The main difficulty in verifying
privacy is to model \emph{adaptivity}: later inputs can depend on
earlier outputs. Indeed, differential privacy behaves well under
adaptivity, a highly useful property enabling applications to adaptive
data analysis and statistics \citep{dwork2015reusable}.

We can view adaptive inputs as controlled by an \emph{adversary}, who
receives earlier outputs and selects the next input. We draw on
techniques for formally verifying cryptographic proofs, which often
also involve an adversary who is trying to break the protocol. We take
inspiration from the treatment of adversaries in the logic \Sprhl, an
exact version of \Saprhl that has been used for verifying
cryptographic proofs~\citep{BartheGZ09}. Specifically, we extend
\Saprhl with a rule \textsc{[Adv]} for the adversary. The rule,
displayed in \Cref{fig:adv-rules}, generalizes the adversary rule from
\Sprhl; let $\Phi$ an assertion that does not contain any adversary
variable, and assume that the adversary $\A$ has access to oracles
$\Or_1$, \ldots, $\Or_n$ and that each oracle guarantees equality of
outputs and an invariant $\Phi$, provided it is called on equal inputs
that satisfy $\Phi$.  Then, $\A$ guarantees equality of outputs and an
invariant $\Phi$, provided it is called on equal inputs that satisfy
$\Phi$. Moreover, the privacy cost of calling the adversary $\A$ is
equal to $\langle\sum_{k=1}^{n}q_k \epsilon_k,
\sum_{k=1}^{n}q_k\delta_k\rangle$ where $\langle
\epsilon_i,\delta_i\rangle$ is the cost of calling once the oracle
$\Or_i$, and $q_i$ is the maximal number of adversarial queries for
oracle $\Or_i$.
\begin{figure*}
$$\begin{array}{c}
\inferrule*[Left=Adv]
    {\forall i, \vec{y}, \vec{z}.\,
        \AEquiv{\Ass{\vec{y}}{\Or_i(\vec{z})}}{\Ass{\vec{y}}{\Or_i(\vec{z})}}
              {\vec{z}\sidel = \vec{z}\sider \land \Phi }
              {\vec{y}\sidel = \vec{y}\sider \land \Phi}
              {\epsilon_i}{\delta_i}
    }
    { \AEquiv{\Ass{\vec{x}}{\A(\vec{e})}}{\Ass{\vec{x}}{\A(\vec{e})}}
            {\vec{e}\sidel = \vec{e}\sider \land \mathbf{x}_{\mathcal{A}}
            \sidel =  \mathbf{x}_{\mathcal{A}}\sider \land  \Phi }
            {\vec{x}\sidel = \vec{x}\sider \land
   \mathbf{x}_{\mathcal{A}}\sidel =  \mathbf{x}_{\mathcal{A}}\sider  \land \Phi}
            {\sum_{k=1}^{n}q_k\epsilon_k}{\sum_{k=1}^{n}q_k\delta_k}}
\end{array}$$
where $q_1, \ldots, q_n$ are the maximal number of queries that $\A$ can make to
oracles $\Or_1, \ldots, \Or_n$ and $\mathbf{x}_\A$ is the state of the adversary.
\caption{Adversary rule}
\label{fig:adv-rules}
\end{figure*}
One can prove the soundness of the adversary rule by induction on the
code of the adversary.
\begin{proposition}
The rule \textsc{[Adv]} is sound.
\end{proposition}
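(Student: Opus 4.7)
The plan is a structural induction on the code of the adversary $\A$, viewed as a command in the imperative language. We may assume without loss of generality that $\A$ writes only to its own state $\mathbf{x}_\A$ and to the output variables $\vec{x}$, that $\Phi$ mentions no adversary variable, and that the only way $\A$ affects or observes the external memory is through oracle calls to $\Or_1, \ldots, \Or_n$. Under these conventions, the precondition supplies equality of the adversary state on both sides, which will be maintained throughout by every non-oracle construct of $\A$.

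The base cases are $\Skip$, deterministic assignments to adversary variables, and local random samplings (e.g.\ any Laplace draws the adversary makes internally). In each case $\Phi$ is preserved vacuously because it mentions no adversary variable, and equality of the adversary state is maintained at zero privacy cost: use \textsc{[Assn]} for deterministic updates and \textsc{[LapNull]} to couple identical noise on both sides. None of these cases consumes an oracle query.

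For the inductive cases, sequential composition $c_1; c_2$ uses the induction hypothesis on each subcommand together with \textsc{[Seq]}, so per-oracle query counts (and hence the $(\epsilon, \delta)$ budgets) add. Conditionals use \textsc{[Cond]}: equality of the adversary state in the precondition forces equal guards on both sides, so the rule applies after invoking the induction hypothesis on each branch. Loops use \textsc{[While]} analogously, with $\mathbf{x}_\A\sidel = \mathbf{x}_\A\sider \land \Phi$ as the propagated invariant. Finally, oracle calls $\Ass{\vec{y}}{\Or_i(\vec{z})}$ discharge directly against the premise of \textsc{[Adv]}, yielding cost $(\epsilon_i, \delta_i)$ together with the required postcondition.

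The main obstacle is the global accounting of queries. The induction naturally produces a budget proportional to the number of oracle calls actually taken along a given execution path, whereas the conclusion demands the uniform bound $\bigl(\sum_k q_k \epsilon_k,\, \sum_k q_k \delta_k\bigr)$. The standard cryptographic convention used in \Sprhl-style adversary reasoning~\citep{BartheGZ09} is that $q_k$ is an a priori bound on the number of calls $\A$ may make to $\Or_k$ on any execution, typically enforced through a semantic query counter embedded in the oracle's state. With this bound in hand, a final application of \textsc{[Conseq]} lifts the per-path costs to the uniform one claimed by the rule, completing the proof.
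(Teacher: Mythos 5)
Your proposal is correct and matches the paper's approach: the paper itself gives no proof detail beyond the single sentence ``One can prove the soundness of the adversary rule by induction on the code of the adversary,'' which is exactly the structural-induction strategy you carry out. You correctly spell out the WLOG conventions on the adversary's code (interaction with non-adversary memory only via oracle calls), the base and inductive cases using \textsc{[Assn]}, \textsc{[LapNull]}, \textsc{[Seq]}, \textsc{[Cond]}, and \textsc{[While]} with the invariant $\mathbf{x}_\A\sidel = \mathbf{x}_\A\sider \land \Phi$, the fact that equality of adversary state forces equal branch and loop guards, and the discharge of oracle calls against the premise. You also correctly identify the only nontrivial accounting point, namely that per-path query budgets must be relaxed to the uniform $\sum_k q_k \epsilon_k$, $\sum_k q_k \delta_k$ via the a priori query bound and \textsc{[Conseq]}.
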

Note that the proof of sparse vector only makes a restricted use of
the \textsc{[Adv]} rule: as $\A$ does not have access to any oracle,
the \Sprhl\ rule suffices for the proof. However, the following,
oracle based, presentation of sparse vector uses the full power of the
\textsc{[Adv]} rule:
\[
  \begin{array}{l}  
    \Ass{l}{[]}; \\  
    \Rand{u}{\Lap_{\epsilon/2}(0)}; \\
    \Ass{A}{a-u}; \\
    \Ass{B}{b+u}; \\
    \Call{x}{\mathcal{A}^\mathcal{O}}{}; \\
    \mathsf{return}~l 
  \end{array}
\]
where $\mathcal{O}$ is an oracle that takes a query and checks whether
it is between thresholds and updates a public list $l$, and $\A$ is
allowed to query $\mathcal{O}$ up to $N$ times. In addition, we note
that our new rule can also be useful for cryptographic proofs which
involve reasoning about statistical distance.

\section{Optimal subset coupling} \label{sec:optimal}
\begin{figure*}
  \[
  \inferrule* [Left=LapInt]
  { \epsilon' \triangleq \ln \left( \frac{\exp( \eta \epsilon )}{1 - \exp(-\sigma\epsilon/2)} \right) \\
    \Phi \triangleq |e\sidel - e\sider| \leq k \land
    (p + k \leq r < s \leq q - k) \land
    (q - p) - (s - r) \leq \eta \land
    0 < \sigma \leq (s - r) + 2
  }
  { \AEquiv{\Rand{y_1}{\Lap_\epsilon(e)}}{\Rand{y_2}{\Lap_\epsilon(e)}}
    { \Phi }{ y_1\sidel \in [p, q] \leftrightarrow y_2\sider \in [r, s] } { \epsilon' }{ 0 } }
\]
\caption{Interval coupling for Laplace}
\label{fig:lap-int}
\end{figure*}

The privacy proof of \EXAMPLE relies on a new interval coupling rule
\textsc{[LapInt]} for Laplace sampling, \cref{fig:lap-int}. This rule allows us
to relate a larger interval with a smaller interval nested inside. That is, we
can assume that the sample $y_1\sidel$ lies in $[p, q]$ if and only if the
sample $y_2\sider$ lies in $[r, s]$ contained in $[p, q]$. The privacy cost
depends on two things: the difference in sizes of the two intervals
$(q - p) - (s - r)$, and the size of the inner interval $s - r$. Roughly, a
larger inner interval and smaller outer interval yield a smaller privacy cost.

To show that this rule is sound, we will first prove a general construction for
liftings of the form
\[
  \mu \alift{(y_1 \in P \leftrightarrow y_2 \in Q)}{(\epsilon, 0)} \mu
\]
for $Q \subseteq P$. We call such such liftings \emph{subset couplings}, since
they relate a set of outputs to a subset of outputs. Our construction applies to
all discrete distributions, and is \emph{optimal} in a precise sense: among all
liftings of the relation, our construction gives the smallest (i.e., the most
precise) $\epsilon$.

\begin{theorem}[Optimal subset coupling] \label{thm:opt-subset}
  Let $\mu$ be a distribution over $S$, and consider two proper
  subsets $P, Q$ of $S$ such that $Q \subseteq P$. Then $\mu(P) \leq
  \alpha \mu(Q)$ if and only if the following lifting holds:
  \[
    \mu \alift{R}{(\ln \alpha, 0)} \mu
  \]
where the relation $R$ is defined by the following clause:
  \[
  (y_1,y_2)\in R \triangleq  y_1 \in P \leftrightarrow y_2 \in Q
  \]
\end{theorem}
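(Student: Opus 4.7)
The plan is to prove the two directions separately, with the reverse direction essentially immediate from the fundamental lemma and the forward direction built from an explicit witness construction.

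For the reverse direction, assume the lifting $\mu \alift{R}{(\ln\alpha,0)} \mu$. Observe that $R$ is contained in the weaker relation
\[
  \Psi = \{(y_1,y_2) \mid y_1 \in P \to y_2 \in Q\},
\]
so the same witnesses establish $\mu \alift{\Psi}{(\ln\alpha,0)} \mu$. Applying the fundamental lemma of approximate liftings with $E_1 = P$ and $E_2 = Q$ yields $\mu(P) \leq \exp(\ln\alpha)\,\mu(Q) = \alpha\,\mu(Q)$.

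For the forward direction, assume $\mu(P) \leq \alpha\,\mu(Q)$ and construct $\mu_L, \mu_R \in \Dist(S\times S)$ explicitly. Since $R$ partitions into the two rectangles $P\times Q$ and $(S\setminus P)\times (S\setminus Q)$, I define each witness as a renormalised product distribution on each part. Assuming for now that $\mu(Q)>0$ and $\mu(S\setminus P)>0$, set
\[
  \mu_L(y_1,y_2) =
  \begin{cases}
    \mu(y_1)\,\mu(y_2)/\mu(Q) & \text{if } y_1\in P,\ y_2\in Q \\
    \mu(y_1)\,\mu(y_2)/\mu(S\setminus Q) & \text{if } y_1\notin P,\ y_2\notin Q \\
    0 & \text{otherwise}
  \end{cases}
\]
and $\mu_R$ analogously but with denominators $\mu(P)$ and $\mu(S\setminus P)$ respectively. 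A short calculation confirms $\pi_1(\mu_L) = \mu$ and $\pi_2(\mu_R) = \mu$, and both supports lie in $R$ by construction. The key asymmetry is that $\mu_L$ normalises by the "right-side" masses $\mu(Q), \mu(S\setminus Q)$ while $\mu_R$ normalises by the "left-side" masses $\mu(P), \mu(S\setminus P)$; this is exactly what makes the ratio $\mu(P)/\mu(Q)$ surface in the distance bound.

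For the distance condition $\Delta_{\ln\alpha}(\mu_L,\mu_R)\leq 0$, it suffices to check $\mu_L(T)\leq \alpha\,\mu_R(T)$ for any $T\subseteq R$. Splitting $T = T_1 \sqcup T_2$ along the two rectangles, one obtains $\mu_L(T_1)/\mu_R(T_1) = \mu(P)/\mu(Q) \leq \alpha$ by hypothesis, and $\mu_L(T_2)/\mu_R(T_2) = \mu(S\setminus P)/\mu(S\setminus Q) \leq 1 \leq \alpha$ since $Q\subseteq P$ forces $\alpha\geq 1$. Adding the two inequalities gives the bound.

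The main obstacle is engineering the two normalisations so that the marginal constraints and the $\epsilon$-distance bound are simultaneously achieved with $\epsilon = \ln\alpha$ and no slack; getting this optimal $\epsilon$ is what distinguishes the construction from the trivial diagonal coupling. A minor secondary issue is the degenerate cases where some of $\mu(Q)$, $\mu(P)$, $\mu(S\setminus P)$, $\mu(S\setminus Q)$ vanish, which I will dispatch separately by collapsing one of the rectangles in the definition (e.g.\ if $\mu(Q)=0$ then $\mu(P)=0$ and the diagonal coupling on $S\setminus P$ works). Optimality of $\ln\alpha$ is then a free consequence: the reverse direction shows any $(\epsilon,0)$-lifting of $R$ forces $\mu(P)\leq e^{\epsilon}\mu(Q)$, so no $\epsilon < \ln(\mu(P)/\mu(Q))$ can work, and our construction attains this lower bound.
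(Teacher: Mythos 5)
Your forward-direction construction is correct and genuinely different from the paper's. You use full renormalised product distributions on the two rectangles $P\times Q$ and $(S\setminus P)\times(S\setminus Q)$ for both witnesses, with $\mu_L$ normalised by the $y_2$-side masses and $\mu_R$ by the $y_1$-side masses. The paper instead keeps $\mu_L$ and $\mu_R$ \emph{diagonal} wherever the diagonal lies inside $R$ (on $Q$ and on $S\setminus P$), and only spreads mass over a product on the problematic block $(P\setminus Q)\times Q$, together with a scaling factor $\lambda=\mu(Q)/\mu(P)$ on $\mu_R$'s diagonal and an extra ``fix-up'' column anchored at a distinguished point $x_0\in S\setminus P$ to absorb the second marginal over $P\setminus Q$. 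Your version is cleaner to write down and its marginal/support/distance checks are more uniform; the paper's version agrees with the identity coupling outside $P\setminus Q$, which is a slightly tighter construction (pointwise it dominates on fewer pairs) but does not buy a better $\epsilon$.

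There is one genuine gap in your degenerate-case handling. You propose to dispatch vanishing denominators by ``collapsing a rectangle,'' and your example ($\mu(Q)=0$) is fine. But consider $\mu(S\setminus P)=0$ while $\mu(P\setminus Q)>0$: this can occur, is not excluded by the hypotheses (``proper subset'' is a set-theoretic condition, not a condition on $\mu$), and is \emph{not} the trivial case $\mu(P\setminus Q)=0$. Here your $\mu_R$ must place positive second-marginal mass on $P\setminus Q$, and the support constraint forces that mass to sit on pairs $(y_1,y_2)$ with $y_1\notin P$ and $y_2\in P\setminus Q$; but your formula $\mu(y_1)\mu(y_2)/\mu(S\setminus P)$ is $0/0$ there, and ``collapsing'' that rectangle would drop the needed mass entirely, breaking $\pi_2(\mu_R)=\mu$. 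The fix is exactly what the paper bakes in: pick a set-theoretic representative $x_0\in S\setminus P$ (nonempty because $P$ is proper) and set $\mu_R(x_0,y_2)=\mu(y_2)$ for $y_2\in P\setminus Q$; since $\mu_L$ vanishes there, this does not disturb the $\ln\alpha$ distance bound. With this amendment your argument is complete. The optimality remark at the end is correct and matches the paper's framing of the result as an iff.
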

\begin{proof}
  The reverse direction follows from the fundamental lemma of approximate
  liftings~\citep{BartheO13}. For the forward implication, we construct two
  witnesses. Note that the theorem is trivial if $\mu(P \setminus Q) = 0$ since
  we can just take the identity coupling, so we will assume otherwise. Define
  the following witnesses:
  \begin{align*}
    \mu_L(x, y) &=
    \begin{cases}
      \mu(x)
      & \text{if } x \notin P\setminus Q \land x = y \\
      \frac{\mu(x) \mu(y)}{\mu(Q)}
      & \text{if } x \in P \setminus Q \land y \in Q \\
      0    & \text{otherwise} .
    \end{cases}
    \\
    \mu_R(x, y) &=
    \begin{cases}
      \mu(y)
      & \text{if } x = y \land y \notin  P \\
      \lambda \cdot \mu(y)
      & \text{if } x = y \land y \in  Q \\
      \frac{(1 - \lambda) \mu(x) \mu(y)}{\mu(P \setminus Q)}
      & \text{if } x \in P \setminus Q \land y \in Q \\
      \mu(y)
      & \text{if } x = x_0 \in S \setminus P \land y \in P \setminus Q \\
      0    & \text{otherwise} .
    \end{cases}
  \end{align*}
  Here, $x_0$ is an arbitrary element of $S \setminus P$. We set $\lambda =
  \mu(Q) / \mu(P)$. Note that $\lambda$ satisfies:
  \[
    \lambda = (1 - \lambda) \frac{\mu(Q)}{\mu(P \setminus Q)} .
  \]
  For the witnesses, it is not hard to see that the marginal
  conditions are satisfied, and that
  $x \mathrel{R} y$ for all pairs $(x, y)$ in the supports of $\mu_L$ and $\mu_R$.
  Furthermore, for all $x\in P$ and $y\in Q$, we have $\mu_L(x, y) =
  (1/\lambda) \cdot \mu_R(x, y)$ by our choice of $\lambda$. By the
  condition on marginals and the support, we have a $(\ln(1/\lambda),
  0)$-lifting.  This immediately implies the $(\ln \alpha, 0)$-lifting
  for any $\alpha \geq 1/\lambda = \mu(P)/\mu(Q)$.
\end{proof}

This result is very much in the spirit of the optimal or \emph{maximal} coupling
for exact probabilistic couplings (see, e.g., \citep{Thorisson00}). By computing
the total variation distance between two distributions, the maximal coupling
construction shows how to create a coupling that exactly realizes the total
variation distance.

Similarly, by the optimal subset coupling, we can construct a subset coupling
and calculate the distance $\epsilon$ in the lifting by simply proving a
property of the discrete Laplace distribution.  Formally, let $\Lap_\epsilon(v)$
for $v \in \mathbb{Z}$ have distribution over $\mathbb{Z}$ with probability
proportional to
\[
  \Pr[r] \propto \exp(- \epsilon |r - v|) .
\]
We write $\Lap_\epsilon$ to mean $\Lap_\epsilon(0)$.  We will use the following
property, a discrete version of \citet[Claim 5.13]{BunSU16}.

\begin{restatable}{lemma}{lemlapint} \label{lem:lap-int}
  Let $r$ be a draw from $\Lap_\epsilon$, and take $a, a', b, b' \in \mathbb{Z}$
  such that $a < b$ and $[a, b] \subseteq [a', b']$.  Then,
  \[
    \Pr[ r \in [a', b'] ] \leq \alpha \Pr[ r \in [a, b] ]
  \]
  with
  \[
    \alpha = \frac{\exp( \eta \epsilon )}{1 - \exp(-(b - a + 2)\epsilon/2)},
    \qquad
    \eta = (b' - a') - (b - a) .
  \]
\end{restatable}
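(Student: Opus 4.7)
The plan is to unfold both probabilities as geometric sums in the discrete Laplace density $f(k) = C \exp(-\epsilon|k|)$, with $C = (1-e^{-\epsilon})/(1+e^{-\epsilon})$, and then compare them. The overall argument parallels the continuous version (Claim 5.13 of Bun--Steinke--Ullman), adapted to the discrete setting. The key ratios are controlled by elementary geometric series, and the two ``magic'' quantities in $\alpha$ arise from two distinct sources: $e^{\epsilon\eta}$ from the worst-case ``gain'' when the outer interval is closer to $0$, and $(1-e^{-\epsilon(b-a+2)/2})$ from a lower bound on the half of $[a,b]$ closer to $0$.

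First I would reduce by symmetry. Because $f(-k)=f(k)$, I may assume WLOG that the heavier half of $[a,b]$ lies on the non-negative side, leaving two positional cases: either $a\geq 0$ (inner interval lies fully to the right of $0$), or $a\leq 0\leq b$ with $|a|\leq b$ (inner interval straddles $0$). In either case, the half of $[a,b]$ closer to $0$ lies entirely on one side of $0$ and contains at least $\lceil (b-a+1)/2\rceil$ integers, so the restriction of the sum to that half admits a geometric-series lower bound
\[
\Pr[r\in[a,b]] \;\geq\; C\, e^{-\epsilon d^*}\,\frac{1-e^{-\epsilon(b-a+2)/2}}{1-e^{-\epsilon}},
\]
where $d^* = \min\{|k|:k\in[a,b]\}$. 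This produces the denominator of $\alpha$.

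For the upper bound I would split $\Pr[r\in[a',b']] = \Pr[r\in[a',a-1]] + \Pr[r\in[a,b]] + \Pr[r\in[b+1,b']]$ and bound each piece via the identity $f(k)/f(j) = \exp(\epsilon(|j|-|k|))\leq \exp(\epsilon|k-j|)$, choosing $j$ in $[a,b]$ as the nearest point so that each extension contributes a geometric series whose first term is at most $e^{\epsilon\eta}\cdot C\,e^{-\epsilon d^*}$ (the shift factor $e^{\epsilon\eta}$ captures the worst-case density increase when the extensions cross over $0$). Summing the three pieces gives an upper bound of the form $e^{\epsilon\eta}\cdot C\,e^{-\epsilon d^*}/(1-e^{-\epsilon})$ up to constants matched to the denominator, and dividing this by the lower bound above yields $\alpha$.

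The main obstacle is the case where $[a,b]$ sits far from $0$ on one side while $[a',b']$ reaches back across $0$: here the density grows substantially over the extension and the factor $e^{\epsilon\eta}$ is near-tight. One must carefully line up the $e^{-\epsilon d^*}$ appearing in the lower bound on $\Pr[r\in[a,b]]$ with the matching factor in the upper bound so that it cancels exactly, leaving precisely $e^{\epsilon\eta}$ in the numerator with no stray multiplicative losses. A secondary subtlety is getting the denominator to be $1-e^{-\epsilon(b-a+2)/2}$ rather than the slightly weaker $1-e^{-\epsilon(b-a+1)/2}$: this requires absorbing the geometric-series ``+1'' offset into the half-interval lower bound, which is a minor bookkeeping point that distinguishes the discrete from the continuous case.
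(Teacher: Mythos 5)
Your overall plan — reduce by symmetry of the Laplace density, split into a ``one-sided'' case and a ``straddling'' case, and control each piece by geometric sums — is essentially the route the paper takes. The paper's proof also reduces to the two cases $a<b\leq 0$ and $a<0<b$ with $|a|\leq|b|$ (mirror images of yours), and computes the mass of a segment as a closed-form geometric sum before bounding the ratio. Your lower bound on $\Pr[r\in[a,b]]$ is correct: taking all the integers of $[a,b]$ that lie on the heavier side of $0$ gives a geometric series with at least $(b-a+2)/2$ effective terms once you check $b+1\geq(b-a+2)/2$ (straddling case) or $b-a+1\geq(b-a+2)/2$ (one-sided case), and the prefactor $Ce^{-\epsilon d^*}$ comes out right. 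That part of the outline is sound.

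The gap is in your upper bound on $\Pr[r\in[a',b']]$. You propose to ``sum the three pieces'' to obtain a single bound of the form $e^{\epsilon\eta}Ce^{-\epsilon d^*}/(1-e^{-\epsilon})$, but one of those three pieces is $\Pr[r\in[a,b]]$ itself, and replacing it by a geometric-tail estimate is lossy. Concretely, take $\eta=0$ (so $[a',b']=[a,b]$), $d^*=0$, and $[a,b]$ a long interval around $0$: then $\Pr[r\in[a',b']]$ is close to $1$, while $Ce^{-\epsilon d^*}e^{\epsilon\eta}/(1-e^{-\epsilon})=1/(1+e^{-\epsilon})<1$, so the claimed upper bound fails even though the lemma is trivially true there (because $\alpha>1$). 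The fix, and the way the paper does it, is to \emph{never} upper-bound the middle piece: write $\Pr[r\in[a',b']]\leq \Pr[r\in[a,b]]+E$, bound only the extension mass $E$ (by a length-$\eta$ geometric series anchored at $b+1$ in the one-sided case, and by $\eta$ copies of the largest boundary point-mass in the straddling case), and then \emph{factor} to get a ratio $\Pr[r\in[a',b']]/\Pr[r\in[a,b]]\leq 1+E/\mathrm{LB}$, comparing $E$ against a lower bound on $\Pr[r\in[a,b]]$. Your instinct that the $e^{\epsilon\eta}$ and $e^{-\epsilon d^*}$ factors have to ``line up and cancel'' is right, but the cancellation must happen multiplicatively in the ratio, not by separately replacing both numerator and denominator with loose estimates.

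One secondary caution: $\lceil (b-a+1)/2\rceil$ is strictly less than $(b-a+2)/2$ when $b-a$ is odd, so the geometric sum over that many terms does \emph{not} give the denominator $1-e^{-(b-a+2)\epsilon/2}$ directly. You need to take the full collection of integers of $[a,b]$ on the heavier side — there are $b+1\geq(b-a+2)/2$ of them in the straddling case — rather than exactly ``half''; you flag this as bookkeeping, but as stated the count is off by one in the odd case.
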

The proof follows by a small calculation; we defer the details to
\thefullversion.  With this property, we can now prove soundness of our subset
rule for sampling from the Laplace distribution.

\begin{theorem}
  The rule \textsc{[LapInt]} is sound.
\end{theorem}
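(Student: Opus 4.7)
Fix memories $m_1, m_2$ with $(m_1, m_2) \models \Phi$ and let $v_i = \dsem{m_i}{e}$, so $|v_1 - v_2| \leq k$. The plan is to reduce to the optimal subset coupling (\cref{thm:opt-subset}) applied to the single centered distribution $\Lap_\epsilon = \Lap_\epsilon(0)$. To do so, I would use the change of variable $t_i = y_i - v_i$, so each $t_i$ is distributed as $\Lap_\epsilon$ and the events $y_1 \in [p, q]$ and $y_2 \in [r, s]$ become $t_1 \in P$ and $t_2 \in Q$ with $P = [p - v_1, q - v_1]$ and $Q = [r - v_2, s - v_2]$. A first check shows $Q \subseteq P$: from $p + k \leq r$ and $v_2 - v_1 \leq k$ one obtains $p - v_1 \leq r - v_2$, and symmetrically $s - v_2 \leq q - v_1$ from $s \leq q - k$ and $v_1 - v_2 \leq k$. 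Moreover the width gap satisfies $(q - p) - (s - r) \leq \eta$.

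Second, I would apply \cref{lem:lap-int} to the nested intervals $Q \subseteq P$ (using $r < s$, so $Q$ is nonempty) to obtain
\[
\prf{t}{\Lap_\epsilon}{t \in P} \;\leq\; \alpha \cdot \prf{t}{\Lap_\epsilon}{t \in Q}
\qquad \text{with} \qquad
\alpha = \frac{\exp\bigl(((q-p) - (s-r))\,\epsilon\bigr)}{1 - \exp(-(s - r + 2)\epsilon/2)} .
\]
Since $(q-p) - (s-r) \leq \eta$ and $0 < \sigma \leq (s-r) + 2$, both monotonicities yield $\alpha \leq \exp(\eta\epsilon)/(1 - \exp(-\sigma\epsilon/2)) = \exp(\epsilon')$; the denominator is positive because $\sigma > 0$. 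Applying \cref{thm:opt-subset} to $\mu = \Lap_\epsilon$ with outer set $P$ and inner set $Q$ then produces witnesses $\nu_L, \nu_R$ realizing the lifting $\Lap_\epsilon \alift{\mathcal{R}'}{(\epsilon', 0)} \Lap_\epsilon$, where $\mathcal{R}' = \{(t_1, t_2) : t_1 \in P \leftrightarrow t_2 \in Q\}$.

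Finally, I would transport this lifting to the target distributions via the bijection $\Phi(t_1, t_2) = (t_1 + v_1, t_2 + v_2)$. Pushing $\nu_L$ and $\nu_R$ forward along $\Phi$ yields witnesses whose marginals are $\Lap_\epsilon(v_1)$ and $\Lap_\epsilon(v_2)$, whose supports lie in $\{(y_1, y_2) : y_1 \in [p, q] \leftrightarrow y_2 \in [r, s]\}$, and whose $\epsilon'$-DP divergence is preserved because $(\Phi_*\nu)(S) = \nu(\Phi^{-1}(S))$ for every $S$ in the codomain. The main obstacle is the bookkeeping in step two: carefully matching the precondition's slack parameters $\eta$ and $\sigma$ against the tighter quantities $(q-p)-(s-r)$ and $(s-r)+2$ coming out of \cref{lem:lap-int}, so that the $\epsilon'$ stated in the rule is indeed an upper bound on $\ln\alpha$; the coupling construction itself is then an immediate consequence of the subset coupling and a routine pushforward.
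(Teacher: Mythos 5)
Your proposal is correct and follows essentially the same route as the paper's proof: recenter both samples to $\Lap_\epsilon(0)$, verify the nesting $Q \subseteq P$ of the shifted intervals from the preconditions on $p,q,r,s,k$, invoke \cref{lem:lap-int} to get the ratio bound $\mu(P) \leq \alpha\,\mu(Q)$, and then apply \cref{thm:opt-subset} to manufacture the $(\epsilon',0)$-lifting; you are slightly more careful than the paper about the monotonicity in $\eta$ and $\sigma$ needed to dominate the exact ratio by $\exp(\epsilon')$. The one step the paper makes explicit that you leave implicit is the passage from a lifting between the sampled-value distributions $\Lap_\epsilon(v_1)$, $\Lap_\epsilon(v_2)$ to the lifting between output-\emph{memory} distributions required by an \Saprhl judgment, which the paper handles via \cref{prop:lift-ER} with $f\colon m \mapsto m(y)$.
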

\begin{proof}
  Suppose that $\denot{e\sidel} = v_1, \denot{e\sider} = v_2$, and $|v_1 - v_2|
  = \Delta \leq k$. Let the noises be $w_1 = x\sidel - v_1, w_2 = x\sider =
  v_2$; note that both samples are distributed as $\Lap_\epsilon(0)$. Note that
  $x\sidel \in [a', b']$ exactly when $w_1 \in [a' - v_1, b' - v_1]$, and
  similarly for $x_2$ and $w_2$. By \cref{prop:lift-ER}, to show a lifting on
  memories, it suffices to find a lifting on the distribution over the sampled
  variable $y$, taking $f$ to be the function that maps a memory $m$ to the
  value $m(y)$. So, it suffices to show
  \[
    \Lap_\epsilon(0)
    \alift{ \{ w_1 \in I_1 \leftrightarrow w_2 \in I_2 \} }{(\epsilon', 0)}
    \Lap_\epsilon(0)
  \]
  where $I_1 \triangleq [a' - v_1, b' - v_1]$ and $I_2 \triangleq [a -
    v_2, b - v_2]$. Since $a\sidel + k \leq a\sider$ and $b\sidel \leq
  b\sider - k$, we know $I_2 \subseteq I_1$. Then, we can directly
  apply \cref{lem:lap-int} on these two intervals with $\eta = |I_1| -
  |I_2| = (b' - a') - (b - a)$, and we are done.
\end{proof}

\section{Proving privacy for \EXAMPLE} \label{sec:bt}
We verify differential privacy for the full version of \EXAMPLE with
an adaptive adversary that chooses its queries interactively. We recall the
theorem.

\begin{theorem}
  Let $\epsilon$ and $\delta$ both be in $(0, 1)$. Set
  \[
    \epsilon' \triangleq \frac{\epsilon}{4 \sqrt{2 M \ln(2/\delta)}} .
  \]
  If all adversarial queries $q$ are $1$-sensitive (i.e.\,
  $|\mathsf{evalQ}(q,d)-\mathsf{evalQ}(q,d')| \leq 1$ for every adjacent
  databases $d$ and $d'$), then \EXAMPLE is $(\epsilon,\delta)$-differentially
  private. Formally,
  \begin{equation*}
    \AEquiv{\mathsf{ASV}_{\mathrm{bt}}}{\mathsf{ASV}_{\mathrm{bt}}}
    {\Phi}{r\sidel = r\sider}
    {\epsilon}{\delta}
  \end{equation*}
  where $\Phi$ is defined as
  \[
    =_{\{a,b,N,qs\}}\land \mathsf{Adj}(d\sidel,d\sider) \land b\sidel - a\sidel
    \geq \gamma
  \]
  for
  \[
    \gamma \triangleq \frac{6}{\epsilon'} \ln (4/\epsilon') + \frac{4}{\epsilon}
    \ln(2/\delta) .
  \]
\end{theorem}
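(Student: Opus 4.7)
\medskip

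\textbf{Proof plan.} The proof combines all the tools developed in the preceding sections. The first step is to apply the pointwise equality rule \textsc{[PW-Eq]}, reducing the problem to showing, for each candidate output list $l^\star$, an approximate lifting
\[
  \AEquiv{\mathsf{ASV}_{\mathrm{bt}}}{\mathsf{ASV}_{\mathrm{bt}}}{\Phi}{r\sidel = l^\star \to r\sider = l^\star}{\epsilon}{\delta_{l^\star}}
\]
with $\sum_{l^\star} \delta_{l^\star} \leq \delta$. Since the set of possible lists is handled pointwise, we can effectively ``fix'' which iterations fall between the thresholds in the target list, and aim for a coupling that forces agreement on those choices in both runs.

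Next I would couple the threshold noise $u$ identically on both sides with \textsc{[LapNull]} (cost zero). Because $a$ and $b$ agree in the two runs, this makes $A$ and $B$ equal on both sides. To convert this coupling into a useful invariant I apply the up-to-bad rule \textsc{[UtB-L]}, using the \Sahl accuracy bound for the discrete Laplace (\cref{prop:lap-acc}) to assume that $|u| \leq \tfrac{2}{\epsilon}\ln(2/\delta)$, paying an additive $\delta/2$ in $\delta$. Under this invariant, the useful interval $[A,B]$ has width at least $b-a - 2|u| \geq \gamma - \tfrac{4}{\epsilon}\ln(2/\delta) \geq \tfrac{6}{\epsilon'}\ln(4/\epsilon')$, which will give us the ``slack'' needed for the optimal subset coupling.

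The bulk of the work is the loop. I would apply the adversary rule \textsc{[Adv]} to handle each adaptive call to $\A$: since both sides maintain the same state ($i$, $l$, $A$, $B$) and the adversary is deterministic given its inputs, $\A$ returns the same query $q$ on both sides with zero privacy cost. For the query sampling $\Rand{S}{\Lap_{\epsilon'/3}(\mathsf{evalQ}(q,d))}$, I use the interval rule \textsc{[LapInt]} with the outer interval being a $2$-widened version of $[A,B]$ on one side and the inner interval being $[A,B]$ on the other side (shifting by the $1$-sensitivity gap in $\mathsf{evalQ}(q,d)$ between the two adjacent databases). With $\eta = O(1)$ and inner width $\sigma \geq \tfrac{6}{\epsilon'}\ln(4/\epsilon')$, a short calculation shows that \textsc{[LapInt]} gives cost at most $\epsilon'$ per iteration; crucially, the biconditional in the post-condition forces both runs to take the same branch of the \textsf{if}, so the lists $l\sidel$ and $l\sider$ stay equal.

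Finally, I would apply the advanced composition loop rule \textsc{[AC-While]} instead of the standard \textsc{[While]} rule. Since the loop guard forces $|l| < M$, there are at most $M$ iterations in which the query falls between thresholds (and hence incurs cost $\epsilon'$), while non-accepting iterations incur cost zero by the usual sparse-vector accounting. Feeding $n = M$, per-iteration cost $\epsilon'$, and $\omega = \delta/2$ into \textsc{[AC-While]} yields total loop cost $\epsilon^\star \leq \sqrt{2M\ln(2/\delta)}\,\epsilon' + O(M\epsilon'^2) \leq \epsilon/2$ by the choice of $\epsilon'$, together with $\delta^\star \leq \delta/2$. Summing with the $\epsilon/2$ used on the threshold noise and the $\delta/2$ absorbed in up-to-bad, we obtain the desired $(\epsilon,\delta)$ bound. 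The main obstacle I anticipate is the bookkeeping around the loop invariant: we must track the list $l$, the equality of $A$ and $B$, and the bound on $|l|$ simultaneously, and argue that only the accepting branches consume $\epsilon'$ per iteration so that advanced composition is charged over $M$ rather than $N$ steps.
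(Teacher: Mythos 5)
Your plan has the right ingredients — optimal subset coupling for the between-threshold test, \textsc{[AC-While]} over roughly $M$ iterations, \textsc{[UtB-L]} to discharge the threshold-width assumption — but there are two genuine gaps where your version would not go through as stated.

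\paragraph{The threshold coupling is wrong.}
You couple $u$ identically with \textsc{[LapNull]}, which makes $[A\sidel,B\sidel] = [A\sider,B\sider]$. But \textsc{[LapInt]}'s side condition is $p + k \leq r < s \leq q - k$, where $k$ bounds $|e\sidel - e\sider|$; since the queries are $1$-sensitive, $k = 1$, so the inner interval on side~$\sider$ must sit \emph{strictly inside} $[A\sidel,B\sidel]$ with slack at least $1$ on each end. With equal thresholds this side condition fails outright ($A\sidel + 1 \not\leq A\sider$), and you cannot repair it by ``widening one side'' because the intervals in \textsc{[LapInt]} must be exactly the guards of the two $\mathsf{if}$ tests to yield the biconditional that forces both runs onto the same branch. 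The paper instead couples $u$ with \textsc{[LapGen]} so that $u\sidel = u\sider + 1$, giving $A\sidel + 1 = A\sider$ and $B\sider + 1 = B\sidel$; this is exactly where the remaining $\epsilon/2$ of the budget is spent (your plan is internally inconsistent here, crediting \textsc{[LapNull]} with zero cost but later counting ``$\epsilon/2$ used on the threshold noise'').

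\paragraph{Advanced composition is applied to the wrong loop.}
You intend \textsc{[AC-While]} with $n = M$, but the loop in \cref{fig:running} runs up to $N$ iterations; the rule as stated charges per \emph{iteration}, and the premise needs each iteration to satisfy an $(\epsilon', 0)$ judgment while preserving the invariant, which is false for the ``non-accepting'' iterations under your structuring. You flag this (``AC must be charged over $M$ rather than $N$'') but leave it unresolved. The paper's resolution is a preliminary \emph{program transformation} to a nested loop: an outer loop that runs at most $M$ times (once per accepted query), and an inner loop that scans forward looking for the next between-threshold query. \textsc{[PW-Eq]} and \textsc{[LapInt]} are applied inside, to the inner loop (pointwise on the value of $hd$), yielding $(\epsilon', 0)$ per \emph{outer} iteration; \textsc{[AC-While]} is then applied to the outer loop with $n = M$. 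Your ordering — \textsc{[PW-Eq]} on the whole program output $l^\star$ first, then advanced composition inside — does not compose: once the output list is fixed, you no longer have a loop invariant of the form required by \textsc{[AC-While]}, and pointwise-on-lists is the wrong granularity for the sparse-vector argument.

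The up-to-bad step and the choice of $\omega$ in your budget split are fine and essentially match the paper's; the heart of the argument you are missing is the shift-by-one coupling of the thresholds and the nested-loop restructuring that makes advanced composition applicable over $M$ blocks.
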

\begin{proof}[Proof sketch]
Now, we put everything together to prove \EXAMPLE is $(\epsilon,
\delta)$-private algorithm. We present just the key points of the proof here;
the full proof is formalized in the \EasyCrypt system. We first transform the
algorithm to an equivalent algorithm:\footnote{%
  We have formally verified equivalence of this program with the program in
  \cref{fig:running} by using a recently-proposed asynchronous loop
  rule~\citep{BGHS16}.}

\[
  \begin{array}{l} \\
  \mbox{\EXAMPLE}(a,b,M,N,d) := \\
    \Ass{i}{0};
    \Ass{l}{[]}; \\
    \Rand{u}{\Lap_{\epsilon/2}(0)}; \\
    \Ass{A}{a-u};
    \Ass{B}{b+u}; \\
    \WWhile{i < N \land |l| < M}{} \\
    \quad \Ass{i'} i; \Ass{hd} -1; \\
    \quad \WWhile{i' < N}{} \\
    \quad\quad  \mathsf{if}~ (hd = -1)~ \\
    \quad\quad\quad  \Call{q}{\mathcal{A}}{l};\\
    \quad\quad\quad  \Rand{S}{\Lap_{\epsilon'/3}(\mathsf{evalQ}(q,d))};  \\
    \quad\quad\quad  \mathsf{if}~ (A\leq S \leq B)~ \mathsf{then}~\Ass{hd}{i}; \\
    \quad\quad\quad  \Ass{i}{i + 1}; \\
    \quad\quad  \Ass{i'}{i' + 1}; \\
    \quad  \mathsf{if}~ (hd \neq -1)~\mathsf{then}~\Ass{l}{hd :: l}; \\
    \mathsf{return}~l
  \end{array}
\]
The main change is that the loop iterations in \cref{fig:running} are grouped
into blocks of queries, each handled by an inner loop. Each outer iteration now
corresponds to finding a single query between the thresholds. The inner
iterations loop through the queries until there is a between threshold query. To
allow the inner loop to be analyzed in a synchronized fashion, the inner loop
always continues up to iteration $N$, doing nothing for all iterations beyond
the first between threshold query.

This transformation allows us to use both advanced composition and pointwise
equality. At a high level, we follow four steps. First, we set up the threshold
noise, so that the noisy interval $[A, B]$ is smaller in the first run than in the
second run; this costs a bit of privacy. Then, we handle the loop working inside
to out: we apply the adversary rule, the subset coupling, and pointwise equality
to the inner loop to show privacy for each block of queries that stops as soon
as we see a between threshold query, assuming that the noisy intervals $[A, B]$
are sufficiently large. Next, we apply advanced composition to bound the privacy
cost of the outer loop, still assuming $[A, B]$ is sufficiently large. Finally,
we use up-to-bad reasoning to remove this assumption, increasing $\delta$
slightly for the final $(\epsilon, \delta)$-privacy bound.

We now detail each step. To reduce notation, we will suppress the adjacency
predicate $Adj(d\sidel, d\sider)$ which is preserved throughout the computation
and the proof.

\paragraph*{Threshold coupling}
Let $c_t$ be the initialization command, including all commands before the loop.
First, we can prove:
\begin{equation*}
  \AEquiv{c_t}{c_t}{\Phi}{\Phi'}{\frac{\epsilon}{2}}{0}
\end{equation*}
where $\Phi' \triangleq A\sidel + 1 = A\sider \land B\sidel = B\sider + 1$,
by applying the rule \textsc{[LapGen]} to ensure $u\sidel = u\sider + 1$ as a
post-condition. This step costs $(\epsilon/2, 0)$ privacy.

\paragraph*{The inner loop}

For the inner loop, we will assume the threshold condition $\Phi'$ and $l\sidel
= l\sider$ initially; both conditions are preserved by the inner loop. We will
also assume that the noisy interval $[A, B]$ is sufficiently large:\footnote{%
  While $\Psi$ does not have tagged variables, we will later interpret $A$ and
  $B$ as coming from the first run.}
\[
  \Psi \triangleq B - A \geq \frac{6}{\epsilon'} \ln (4/\epsilon') .
\]
Let $c_i$ be the inner loop. We will first prove the pointwise judgment:
\begin{equation} \label{eq:bt-pw}
  \begin{split}
    \vdash &{c_i} \sim_{\!\left\langle \epsilon',0\right\rangle} {c_i} : \\
    &{\Phi' \land \Psi\sidel \land l\sidel = l\sider} \Longrightarrow {(hd\sidel = v) \to (hd\sider = v)}
  \end{split}
\end{equation}
for each index $v$.

We focus on the case where $0 \leq v \leq N$, as other cases are easy.  First, we
apply the \textsc{[While]} rule, with $\epsilon_i =0$ except for $i=v$, where we
set $\epsilon_v=\frac{\epsilon}{2}$. Whenever we call the adversary for the next
query, we know that $l\sidel = l\sider$, so we may apply adversary rule with
cost $(0, 0)$ to ensure that $q\sidel = q\sider$ throughout.

Then, the proof for the rest of the loop body goes as follows:
\begin{itemize}
  \item For the iterations $i < v$ and $i > v$, we use the rule
    \textsc{[LapNull]} to couple the noisy query answers $S\sidel, S\sider$.
    This has no privacy cost and preserves the invariant.

  \item For the iteration $i = v$, suppose that $S\sidel \in
    [A\sidel, B\sidel]$ (otherwise we are done). We can apply a
    coupling for the Laplace distribution, \textsc{[LapInt]},
    to ensure that $S\sider \in [A\sider, B\sider]$
    as well.  Under $\Psi\sidel$ and the coupling on the noisy
    thresholds $\Phi'$, the inner interval $[A\sider, B\sider]$ has
    size at least $(6 / \epsilon') \ln (4 / \epsilon') - 2$. Taking $(p, q,
    r, s) = (A\sidel, B\sidel, A\sider, B\sider)$, $\eta = 2$,
    $\sigma = (6 / \epsilon') \ln (4 / \epsilon')$, and $k = 1$, a
    calculation\iffull\footnote{%
      We will show that for $\lambda \in (0, 1/2)$, we have
      \[
        \frac{\exp(2 \lambda/3)}{1 - \exp(- \sigma \lambda/6)} \leq \exp(\lambda)
      \]
      when $\sigma \geq (6/\lambda) \ln (4/\lambda)$.  Substituting, it suffices
      to show:
      \begin{align*}
        \frac{\exp(2 \lambda/3)}{1 - \lambda/4} &\leq \exp(\lambda) \\
        \lambda/4 + \exp(-\lambda/3) - 1 &\leq 0 .
      \end{align*}
      Since the function on the left is convex in $\lambda$, the maximum
      occurs on the boundary of the domain. We can directly check that the
      inequality holds at $\lambda = \{ 0, 1/2 \}$, and we are done.
    }\fi{}
    shows that \textsc{[LapInt]} gives a $(\epsilon', 0)$-lifting so the
    critical iteration has privacy cost $\epsilon'$.
\end{itemize}
This establishes \cref{eq:bt-pw}. By the pointwise equality rule
\textsc{[PW-Eq]}, we have:
\[
  \AEquiv{c_i}{c_i}{\Phi' \land \Psi\sidel \land l\sidel = l\sider}{hd\sidel = hd\sider}{\epsilon'}{0}
\]
By \textsc{[Frame]} and some manipulations, we can assume that $l\sidel =
l\sider$ at the end of each iteration of the outer loop.

\paragraph*{The outer loop}
For the outer loop, we apply advanced composition. Letting $c_o$ be the outer
loop, our choice of $\epsilon'$ and corresponds to the setting in
\cref{thm:advcomp-dp}, so we have the following judgment by \textsc{[AC-While]}:
\[
  \AEquiv{c_o}{c_o}{l\sidel = l\sider \land \Phi' \land \Psi\sidel}
  {l\sidel = l\sider}{\epsilon/2}{0} .
\]
Since $c_o$ does not modify the thresholds and preserves $\Psi\sidel$,
\textsc{[Frame]} and some manipulations allows us to move this assertion into
the post-condition:
\[
  \AEquiv{c_o}{c_o}{l\sidel = l\sider \land \Phi'}
  {\Psi\sidel \to l\sidel = l\sider}{\epsilon/2}{0} .
\]

\paragraph*{Applying up-to-bad reasoning}
Finally, we can apply \textsc{[Seq]} with our judgement for the initialization
$c_i$ and the outer loop $c_o$, giving:
\[
  \AEquiv{\mbox{\EXAMPLE}}{\mbox{\EXAMPLE}}{\Phi_0}
  {\Psi\sidel \to l\sidel = l\sider}{\epsilon/2}{0}
\]
for
\[
  \Phi_0 \triangleq~ =_{\{a,b,N,qs\}}\land\ \mathsf{Adj}(d\sidel,d\sider)
                     \land b\sidel - a\sidel \geq \gamma .
\]
To conclude the proof, all that remains is to remove the assertion $\Psi\sidel$.
We will bound the probability that $\Psi\sidel$ does not hold. The accuracy rule
for the Laplace mechanism gives
\[
  \ahl{\Rand{u}{\Lap_{\epsilon/2}(0)}}{b - a \geq \gamma}
  {|u| \leq \frac{2}{\epsilon}\log(1/\delta)}{\delta} ,
\]
from which we can conclude
\[
  \ahl{\mbox{\EXAMPLE}}{b - a \geq \gamma}{\Psi}{\delta} .
\]
Finally, applying \textsc{[UtB-L]} yields the desired judgment:
\[
  \AEquiv{\mbox{\EXAMPLE}}{\mbox{\EXAMPLE}}{\Phi_0}
  {l\sidel = l\sider}{\epsilon/2}{\delta}
  \qedhere
\]
\end{proof}

\section{Related Works} \label{sec:related}

Differential privacy~\citep{DMNS06} has been an area of intensive research in
the last decade. We refer readers
interested in a more comprehensive treatment of the algorithmic
aspects of differential privacy to the excellent monograph by~\citet{DR14}.
Several tools have been developed to support the development of
differentially private data analysis.
PINQ \citep{pinq} internalizes the use of standard composition in the form
of a privacy budget management platform, Airavat \citep{RSKSW10} uses
differential privacy combined with the map-reduce approach, GUPT \citep{MTSSC12}
implements the general idea of sample and
aggregate \citep{NRS07}. Other tools implement
algorithms targeting specific applications like: location data \citep{OnTheMap},
genomic data \citep{FienbergSU11,SSB16}, mobility data \citep{DP-where},
and browser error reports \citep{EPK14}.

Several tools have been proposed for providing formal verification of
the differential privacy guarantee, using a wide variety of verification
approaches:
dynamic
checking~\citep{pinq,conf/popl/EbadiSS15}, relational program
logic~\citep{BartheKOZ13,BartheO13} and relational refinement type
systems~\citep{BGGHRS15}, linear (dependent) type
systems~\citep{ReedPierce10,GHHNP13}, product
programs~\citep{BGGHKS14}, methods based on computing bisimulations
families for probabilistic
automata~\citep{Tschantz201161,xu:hal-00879140}, and  methods based on
counting variants of satisfiability modulo
theories~\citep{FredriksonJ14}.  None of these techniques can handle
advanced composition, interactive online algorithms and
privacy depending on accuracy.  \citet{BartheDGKB13} present a system
for reasoning about \emph{computational differential
  privacy}~\citep{MironovPRV09} a relaxation of differential privacy
where the adversary are computationally-bound.

Coupling is an established tool in probability theory, but it seems less
familiar to computer science. It was only quite recently that couplings have
been used in cryptography; according to \citet{HoangR10}, who use couplings to
reason about generalized Feistel networks, \citet{Mironov02} first used this
technique in his analysis of RC4. There are seemingly few applications of coupling in formal
verification, despite considerable research on probabilistic bisimulation (first
introduced by~\citet{LarsenS89}) and probabilistic relational program logics
(first introduced by~\citet{BartheGZ09}). The connection between
liftings and couplings was recently noted by~\citet{BartheEGHSS15} and
explored for differential privacy by~\citet{BGGHS16}. The latter
uses a coupling argument to prove differentially private the sparse
vector algorithm that we also consider in this work. The additional
challenges that we face are: first, the integration of advanced composition,
providing a much better privacy bound; second, the proof that sparse
vector is differentially private also in the interactive model, which
requires additionally to have a logic that permits to reason about the
adversary. Moreover, \citet{BGGHS16} do not provide methods to
prove privacy using accuracy.

In promising recent work, \citet{ZK16} design a system to automatically verify
differentially privacy for examples where the privacy proof uses tools beyond
the standard composition theorem, including the Sparse Vector technique.  Their
proof strategy is morally similar to couplings, but their work uses a
combination of product programs and lightweight refinement types backed by novel
type-inference techniques, rather than a relational program logic like we
consider. Their system can also optimize the privacy cost, something that we do
not consider. While their work is highly automated, their system is limited to
pure, $(\epsilon, 0)$ differential privacy, so it cannot verify the algorithms
we consider, where privacy follows from accuracy or the advanced composition
theorem. Their techniques also seem limited to modeling couplings that arise
from bijections; in particular, it is not clear how to prove privacy for
examples that use more advanced couplings like the optimal subset coupling.

\section{Concluding remarks} \label{sec:conclusions}
We have presented an extension of the logic \Saprhl~\citep{BartheKOZ13} that can
express three classes of privacy proofs beyond current state-of-the-art
techniques for privacy verification: privacy depending on accuracy, privacy from
advanced composition, and privacy for interactive algorithm. We have formalized
a generalization of the adaptive Sparse Vector algorithm, known as Between
Thresholds~\citep{BunSU16}. This and other possible generalizations of sparse
vector could bring interesting results in domains like
geo-indistinguishability~\citep{AndresBCP13}.

For the future, it would be interesting to explore generalizations of
differential privacy like the recent notion of \emph{concentrated differential
  privacy}~\citep{DworkR2016,BunS2016}.  This generalization features a simple
composition principle that internalizes the advanced composition principle of
standard differential privacy. However, it is currently unclear whether the
definition of concentrated differential privacy, which involves R\'enyi
divergences, can be modeled using \Saprhl.

Additionally, there is still room for improving the expressivity of
\Saprhl for differential privacy. One interesting
example combining accuracy and privacy is the \emph{large margin
mechanism}~\citep{ChaudhuriHS14}. The privacy proof for this algorithm
requires careful reasoning about the size of the support when applying pointwise
equality, and sophisticated facts about the accuracy Sparse Vector. This example
seems beyond the reach of our techniques, but we believe it could be handled by
generalizing the existing rules.

Finally, it would be interesting to explore a tighter integration of accuracy
and privacy proofs. We currently use two systems, \Sahl and \Saprhl, to verify
privacy. This can lead to awkward proofs since the two logics can only interact
in specific places in the proof (i.e., the up-to-bad rules). A combined
version of the logics could allow more natural proofs.

\bibliographystyle{abbrvnat}
\bibliography{header,main}

\iffull
\appendix

\section{Omitted proofs}

\propliftER*
\begin{proof}
  For the forward direction, take the witnesses $\mu_L, \mu_R \in \Dist(A \times
  A)$ and define witnesses $\nu_L \triangleq \lift{(f \times f)}(\mu_L)$ and
  $\nu_R \triangleq \lift{(f \times f)}(\mu_R)$. The support condition is clear,
  the marginal requirement is clear, and the distance requirement also follows
  directly: for any set $S \subseteq B \times B$, apply the distance condition
  on $\mu_L, \mu_R$ for the set $f^{-1}(S)$.

  For the reverse direction, let $\nu_L, \nu_R \in \Dist(B \times B)$ be the
  witnesses to the second lifting. We aim to construct a pair of witnesses
  $\mu_L, \mu_R \in \Dist(A \times A)$ to the first lifting. The basic idea is
  to define $\mu_L$ and $\mu_R$ based on equivalence classes of elements in $A$
  mapping to a particular $b \in B$, and then smooth out the probabilities
  within each equivalence class.
  To begin, for $a \in A$, define
  \begin{align*}
    [a]_{f} &\triangleq f^{-1}(f(a)) &
    \alpha_i(a) &\triangleq \textstyle \Pr_{\mu_i} [\{ a \} \mid [a]_f]
  \end{align*}
  We also assume a function $\rho$ that takes a set $S$ and selects an arbitrary
  representative $\rho(S) \in S$.  We take $\alpha_i(a) = 0$ when $\mu_i([a]_f)
  = 0$.

  Now, we define $\mu_L$ and $\mu_R$ as
  \begin{align*}
    \mu_L &: (a_1, a_2) \mapsto \alpha_L(a_1, a_2) \cdot \nu_L(f(a_1), f(a_2)) \\
    \mu_R &: (a_1, a_2) \mapsto \alpha_R(a_1, a_2) \cdot \nu_R(f(a_1), f(a_2))
  \end{align*}
  where
  \[
    \alpha_L(a_1, a_2) =
    \begin{cases}
      \alpha_1(a_1) \cdot \alpha_2(a_2) &: \mu_2([a_2]_f) \neq 0 \\
      \alpha_1(a_1) &: \mu_2([a_2]_f) = 0 \land a_2 = \rho([a_2]_f) \\
      0 &: \text{otherwise}
    \end{cases}
  \]
  \[
    \alpha_R(a_1, a_2) =
    \begin{cases}
      \alpha_1(a_1) \cdot \alpha_2(a_2) &: \mu_1([a_1]_f) \neq 0  \\
      \alpha_2(a_2) &: \mu_1([a_1]_f) = 0 \land a_1 = \rho([a_1]_f) \\
      0 &: \text{otherwise}
    \end{cases}
  \]
  %
  %


  The support and marginal conditions follow from the support and marginal
  conditions of $\nu_L$, $\nu_R$. For instance, note that if $\mu_1([a_1]_f) = 0$,
  then $(\pi_1 \mu_L)(a_1) = 0 = \mu_1(a_1)$. Otherwise, assume $\mu_1([a_1]_f)
  \neq 0$ and compute:
  \begin{align*}
    \sum_{a_2} \mu_L(a_1, a_2)
    &= \sum_{a_2} \alpha_L(a_1, a_2) \nu_L(f(a_1), f(a_2)) \\
    &= \sum_{a_2 : \mu_2([a_2]_f) = 0} \alpha_L(a_1, a_2) \nu_L(f(a_1), f(a_2))
    \\
    &+ \sum_{a_2 : \mu_2([a_2]_f) \neq 0} \alpha_L(a_1, a_2) \nu_L(f(a_1), f(a_2)) \\
    &= \sum_{[a_2]_f : \mu_2([a_2]_f) = 0} \alpha_1(a_1) \cdot \nu_L(f(a_1),
    f(a_2)) \\
    &+ \sum_{a_2 : \mu_2([a_2]_f) \neq 0}
    \alpha_1(a_1) \cdot \alpha_2(a_2) \cdot \nu_L(f(a_1), f(a_2)) \\
    &= \sum_{[a_2]_f : \mu_2([a_2]_f) = 0} \alpha_1(a_1) \cdot \nu_L(f(a_1),
    f(a_2)) \\
    &+ \sum_{[a_2]_f : \mu_2([a_2]_f) \neq 0}
    \alpha_1(a_1) \cdot \nu_L(f(a_1), f(a_2)) \\
    &= \alpha_1(a_1) \sum_{b_2 \in f(A)} \nu_L(f(a_1), b_2) \\
    &= \alpha_1(a_1) \mu_1([a_1]_f) = \mu_1(a_1) .
  \end{align*}
  The last step uses $f(A) = B$ because $f$ is surjective, so we may apply the
  marginal condition of $\pi_1 \nu_L = \lift{f}(\mu_1)$.  A similar calculation
  shows $\pi_2 \mu_R = \mu_2$.

  We now check the distance condition
  $\Delta_\epsilon(\mu_L, \mu_R) \leq \delta$.
  Since we have $\Delta_\epsilon(\nu_L, \nu_R) \leq \delta$, we know that
  \begin{gather*}
    \nu_L(b_1, b_2) \leq e^\epsilon \nu_R(b_1, b_2) + \delta(b_1, b_2)
  \end{gather*}
  for $\delta(b_1, b_2) \geq 0$ and $\sum_{b_1, b_2} \delta(b_1, b_2) \leq
  \delta$. We now prove that, for any $(a_1, a_2) \in A \times A$,
  we have
  $\mu_L(a_1, a_2) \leq e^\epsilon \mu_R(a_1, a_2) + \zeta(a_1, a_2)$
  where
  $\zeta(a_1, a_2) \triangleq
     \alpha_L(a_1,a_2) \cdot \delta(f(a_1), f(a_2))$.
  Let $(a_1, a_2) \in A \times A$ and $b_i \triangleq f(a_i)$.
  %
  %
  Then, consider $a_1 \in A$. If $\mu_1([a_1]_f) = 0$, we can immediately bound
  \[
    \mu_L(a_1, a_2) = 0 \leq e^\epsilon \mu_R(a_1, a_2) + \zeta(a_1, a_2) .
  \]
  Otherwise, assume $\mu_1([a_1]_f) \neq 0$. Then, we can bound:
  \begin{align*}
    \mu_L(a_1, a_2)
      &= \alpha_L(a_1, a_2) \cdot \nu_L(b_1, b_2) \\
      &\leq \alpha_L(a_1, a_2) \cdot (e^\epsilon \nu_R(b_1, b_2) + \delta(b_1, b_2)) \\
      &= e^\epsilon (\alpha_R(a_1, a_2) \cdot \nu_R(b_1, b_2)) +
           \alpha_L(a_1, a_2) \cdot \delta(b_1, b_2) \\
      &= e^\epsilon \mu_R(a_1, a_2) + \alpha_L(a_1, a_2) \cdot \delta(b_1, b_2) \\
      &\leq e^\epsilon \mu_R(a_1, a_2) + \zeta(a_1, a_2) .
  \end{align*}
  The third line changes from $\alpha_L$ to $\alpha_R$ in the first factor.
  Note that $\alpha_L(a_1, a_2) \neq \alpha_R(a_1, a_2)$ exactly when
  $\mu_2([a_2]_f) = 0$, when $\nu_R(b_1, b_2) = 0$ as well.

  Now, we just need to bound sums of $\zeta(a_1, a_2)$.  First, we can show that
  the sum of $\alpha_L$ within any equivalence class is at most $1$. Consider
  $b_1, b_2$. If $\mu_1(f^{-1}(b_2)) = 0$, then we have
  \begin{align*}
    \sum_{a_1 \in f^{-1}(b_1)} \sum_{a_2 \in f^{-1}(b_2)} \alpha_L(a_1, a_2)
    &= \sum_{a_1 \in f^{-1}(b_1)} \alpha_1(a_1) \leq 1 .
  \end{align*}
  Otherwise, if $\mu_1(f^{-1}(b_2)) \neq 0$, we have $\alpha_L(a_1, a_2) \leq
  \alpha_1(a_1) \alpha_2(a_2)$ and again
  \begin{align*}
    \sum_{a_1 \in f^{-1}(b_1)} \sum_{a_2 \in f^{-1}(b_2)} \alpha_L(a_1, a_2) \leq 1 .
  \end{align*}
  Hence,
  \begin{align*}
    \sum_{a_1, a_2} \zeta(a_1, a_2) &=
      \sum_{b_1, b_2} \delta(b_1, b_2)
        \sum_{a_1 \in f^{-1}(b_1)}
        \sum_{a_2 \in f^{-1}(b_2)} \alpha_L(a_1, a_2) \\
      &\leq \sum_{b_1, b_2} \delta(b_1, b_2) \leq \delta.
  \end{align*}
  So for any $S \subseteq A \times A$, we have $\mu_L(S) \leq e^\epsilon
  \mu_R(S) + \delta$. This shows that  $\Delta_\epsilon(\mu_L, \mu_R) \leq 0$,
  as desired.
\end{proof}

\lemlapint*
\begin{proof}
  We first define some notation. Let $W$ be the total mass of the Laplace
  distribution, without normalization. By direct calculation,
  \[
    W = \sum_{r = -\infty}^{+\infty} \exp(- |r| \epsilon) = \frac{e^\epsilon +
      1}{e^\epsilon - 1} .
  \]
  Let $L(x, y)$ be the mass of the Laplace distribution in $[x, y]$ for $x, y
  \leq 0$. Again by direct calculation,
  \[
    L(x, y) = \frac{1}{W} \sum_{r = a}^b \exp(r \epsilon) = \frac{ e^{(y + 1)
        \epsilon}
      - e^{x \epsilon}}{e^\epsilon + 1} .
  \]

  With this notation, we want to show $L(a', b') \leq \alpha L(a, b)$.  There
  are four cases: $a < b \leq 0$, $a < 0 < b$ with $|a| \leq |b|$, $0 \leq a <
  b$, and $a < 0 < b$ with $|a| \geq |b|$. By symmetry of the Laplace
  distribution it suffices to consider the first two cases.

  For the first case, $a < b \leq 0$. By direct calculation:
  \begin{align*}
    L(a', b') &\leq L(a, b) + \frac{1}{W} \sum_{r = b + 1}^{b + \eta} e^{r
    \epsilon} \\
    &= \frac{ e^{ (b + 1 + \eta)\epsilon }  - e^{ a \epsilon } }{ e^{\epsilon} + 1 }
    \\
    &= \frac{1}{e^{\epsilon} + 1} (e^{ (b + 1)\epsilon }  - e^{ a \epsilon })
    \left( \frac{ e^{\eta \epsilon} - e^{- (b - a + 1) \epsilon } }{ 1 - e^{ - (b
          - a + 1) \epsilon }} \right)
    \\
    &= \left( \frac{ e^{ \eta \epsilon } - e^{ - (b - a + 1) \epsilon } }{ 1 -
        e^{ - (b - a + 1) \epsilon }} \right) L(a, b)
    \leq \alpha L(a, b)
  \end{align*}

  For the second case, $a < 0 < b$ with $|a| \leq |b|$. By direct calculation:
  \begin{align*}
    L(a', b') &\leq L(a, b) + \eta L(a, a)
    = L(a, b) + \eta \left( \frac{e^{ \epsilon } - 1}{e^{ \epsilon } + 1}
    \right) e^{ a \epsilon }
    \\
    &\leq \left( 1 + \eta \left( \frac{e^{ \epsilon } - 1}{e^{ \epsilon } + 1}
      \right)  \frac{e^{ a\epsilon }}{L(0, b)} \right) L(a, b)
    \\
    &= \left( 1 + \frac{\eta (e^{ \epsilon } - 1) e^{ a \epsilon }}{ e^{ \epsilon } -
        e^{ - b \epsilon } } \right) L(a, b)
    \\
    &= \left( \frac{1 - e^{ -(b + 1)\epsilon } + \eta(e^{ \epsilon } - 1)
        e^{ (a - 1) \epsilon } }{ 1 - e^{ -(b + 1)\epsilon } } \right) L(a, b)
    \\
    &\leq \left( \frac{ 1 + \eta (e^{ \epsilon } - 1)}{1 - e^{ - (b + 1) \epsilon }}
    \right) L(a, b)
    \\
    &\leq \left( \frac{ e^{ 2 \eta \epsilon } }{1 - e^{ - (b - a + 2)\epsilon/2 }} \right)
    \leq \alpha
  \end{align*}
  The last line is because $(b + 1) \geq (b - a + 2)/2$, and because $1 + \eta
  (e^{ \epsilon } - 1) \leq e^{ \eta \epsilon }$ for $\eta \in \mathbb{Z}$ for all
  $\eta, \epsilon \geq 0$.  To see this last fact, note that the inequality holds
  (with equality) at $\eta = 0$. We can directly check that
  \[
    \frac{1 + (\eta + 1) (e^{ \epsilon } - 1)}{1 + \eta (e^{ \epsilon } - 1)}
    \leq
    \frac{e^{ (\eta + 1) \epsilon }}{e^{ \eta \epsilon }} = e^{ \epsilon } ,
  \]
  for $\epsilon \geq 0$, so the inequality is preserved as we increase $\eta$.
\end{proof}
\fi

\end{document}
